\definecolor{gold}{rgb}{0.85,.66,0}
\definecolor{cherry}{rgb}{0.9,.1,.2}
\definecolor{burgundy}{rgb}{0.8,.2,.2}
\definecolor{orangered}{rgb}{0.85,.3,0}
\definecolor{orange}{rgb}{0.85,.4,0}
\definecolor{olive}{rgb}{.45,.4,0}
\definecolor{lime}{rgb}{.6,.9,0}
\definecolor{green}{rgb}{.2,.7,0}
\definecolor{darkgreen}{rgb}{.1,.5,0}
\definecolor{grey}{rgb}{.4,.4,.2}
\definecolor{brown}{rgb}{.4,.2,.1}
\definecolor{blue}{rgb}{0,.0, .81}
\definecolor{bluepurple}{rgb}{.3, .0, .7}
\theoremstyle{definition}
\newtheorem{theorem}{Theorem}[section]
\newtheorem*{theorem*}{Theorem}
\newtheorem{corollary}[theorem]{Corollary}
\newtheorem{example}[theorem]{Example}
\newtheorem*{example*}{Example}
\newtheorem{proposition}[theorem]{Proposition}
\newtheorem{definition}[theorem]{Definition}
\newtheorem{lemma}[theorem]{Lemma}
\def\B{\mathcal{B}}
\def\F{\mathbb{F}}
\def\R{\mathbb{R}}
\def\C{\mathcal{C}}
\def\od{\stackrel{\mathrm{def}}{=}}
\def\U{\mathcal{U}}
\def\N{\mathcal{N}}
\def\CF{\mathrm{CF}}
\def\RR{\mathbb{R}}
\def\Lk{\operatorname{Lk}}
\def\RF{\operatorname{RF}}
\def\vv{{v}}
\begin{document}
\begin{center}
\textbf{ \Large Algebraic signatures of convex and non-convex codes}
\medskip

Carina Curto$^a$, Elizabeth Gross$^b$, Jack Jeffries$^{c,d}$, Katherine Morrison\footnote{corresponding author: Katherine Morrison, University of Northern Colorado, \url{katherine.morrison@unco.edu}}$^{,e,a}$, \\
Zvi Rosen$^{f,g,a}$, Anne Shiu$^h$, and Nora Youngs$^{i, j}$\\
\end{center}

\begin{small}
\hspace{.25in} $^a$ Department of Mathematics, The Pennsylvania State University, University Park, PA 16802

\hspace{.25in} $^b$ Department of Mathematics, San Jos\'{e} State University, San Jos\'{e}, CA 95192

\hspace{.25in} $^c$ Department of Mathematics, University of Utah, Salt Lake City, UT 84112

\hspace{.25in} $^d$ Department of Mathematics, University of Michigan, Ann Arbor, MI 48109

\hspace{.25in} $^e$ School of Mathematical Sciences, University of Northern Colorado, Greeley, CO 80639

\hspace{.25in} $^f$ Department of Mathematics, University of California, Berkeley, Berkeley, CA 94720

\hspace{.25in} $^g$ Departments of Mathematics \& Biology, University of Pennsylvania, Philadelphia, PA 19104

\hspace{.25in} $^h$ Department of Mathematics, Texas A\&M University, College Station, TX 77843

\hspace{.25in} $^i$ Department of Mathematics, Harvey Mudd College, Claremont, CA 91711

\hspace{.25in} $^{j}$ Department of Mathematics and Statistics, Colby College, Waterville, Maine 04901

\end{small}

\medskip

\section*{Abstract}
A \emph{convex code} is a binary code generated by the pattern of intersections of a collection of open convex sets in some Euclidean space.  Convex codes are relevant to neuroscience as they arise from the activity of neurons that have convex receptive fields.  In this paper, we use algebraic methods to determine if a code is convex.  Specifically, we use the \emph{neural ideal} of a code, which is a generalization of the \emph{Stanley-Reisner ideal}.  Using the neural ideal together with its standard generating set, the \emph{canonical form}, we provide algebraic signatures of certain families of codes that are non-convex.  We connect these signatures to the precise conditions on the arrangement of sets that prevent the codes from being convex.  Finally, we also provide algebraic signatures for some families of codes that are convex, including the class of \emph{intersection-complete codes}.  These results allow us to detect convexity and non-convexity in a variety of situations, and point to some interesting open questions.

\medskip
\noindent \textbf{Keywords:} neural coding, convex codes, neural ideal, local obstructions, simplicial complexes, links

\begin{small}
\tableofcontents
\end{small}

\section{Introduction} \label{sec:intro}

A \emph{convex code} is a binary code generated by the pattern of intersections of a collection of open convex sets in some Euclidean space (see Section~\ref{sec:background} for a precise definition and example).  Convex codes have been experimentally observed in sensory cortices \cite{HubelWiesel59} and hippocampus \cite{Okeefe}, where they arise from convex \emph{receptive fields}; this connection has previously been described in detail in \cite{CurrEventsBulletin, MRC, neural_ring}.  Given their relevance to neuroscience, it is valuable to further understand the intrinsic structure of convex codes.  In particular, how can we detect if a neural code is convex?

We have previously found combinatorial constraints that must be satisfied by any code that is convex \cite{MRC}.  In this work, we further address the question of convexity via an algebraic object known as the \emph{neural ideal} $J_\mathcal{C}$, first introduced in \cite{neural_ring}, which is a generalization of the well-studied \emph{Stanley-Reisner ideal}.  We first present conditions, which we refer to as \emph{algebraic signatures}, on $J_\C$ and its standard generating set the canonical form $\CF(J_\mathcal{C})$ that detect that a code is not convex.  We also connect these signatures to precise conditions on the arrangement of sets that prevent a code from being convex.  Finally, we also provide algebraic signatures of certain combinatorial families of convex codes, including \emph{intersection-complete codes}, first introduced in \cite{intersection-complete}.  

In Section~\ref{sec:background}, we provide some background on the algebra of neural codes, convexity of codes, receptive field relationships, and \emph{local obstructions} to convexity.
Next, Section~\ref{sec:main-results} highlights the main results of the paper.  Specifically, Theorem~\ref{thm:A-sigs-non-convex} provides algebraic signatures of two classes of local obstructions; codes satisfying these signatures are thus guaranteed to be non-convex.  Theorem~\ref{thm:int-complete} gives an algebraic signature for the class of intersection-complete codes, which have been proven to be convex.  Section~\ref{sec:examples-main-results} illustrates these main results through a series of example codes satisfying these algebraic signatures.  

The remainder of the paper is organized as follows: Section~\ref{sec:local-obs} formalizes the notion of local obstruction, and Section~\ref{sec:non-convex} provides further results on detecting local obstructions algebraically, including the proof of Theorem~\ref{thm:A-sigs-non-convex}.  Section~\ref{sec:detecting-convex} focuses on algebraic signatures guaranteeing convexity, and includes the proof of Theorem~\ref{thm:int-complete}.  Finally, Section~\ref{sec:examples} collects all the algebraic signatures presented in this paper and provides additional examples of codes satisfying these signatures.  

\subsection{Background}\label{sec:background}
In this paper, we develop algebraic tools for analyzing neural codes, which are collections of binary patterns.  A {\it binary pattern} on $n$ neurons is a string of $0$s and $1$s of length $n$, with a $1$ for each active neuron and a $0$ denoting silence.  We can also view a binary pattern as the subset of active neurons $\sigma \subseteq [n]\od \{1,\ldots,n\}$, so that $i \in \sigma$ precisely when there is a $1$ in the $i$th entry of the binary pattern; thus, we will consider 0/1 strings of length $n$ and subsets of $[n]$ interchangeably.  For example, $1011$ and $0100$ are also denoted $\{1,3,4\}$ and $\{2\}$, respectively.

A {\it neural code} on $n$ neurons, $\C \subseteq 2^{[n]}$, is a collection of binary patterns.  Such a code is also referred to as a {\it combinatorial code} in the neuroscience literature  \cite{neuro-coding}.
The elements of a code are called {\it codewords}.
For convenience, we will always assume a neural code $\C$ includes the all-zeros codeword, $00\cdots0 \in \C$; the presence or absence of the all-zeros codeword has no effect on the code's convexity (see Definition~\ref{def:convex}, below), which is the main focus of this paper.

\subsubsection*{Algebra of neural codes} 
In order to represent a neural code algebraically, it is useful to consider binary patterns of length $n$ as elements of $\F_2^n$, where $\F_2$ is the finite field of two elements: $0$ and $1$.  Polynomials $f \in \F_2[x_1,\ldots,x_n]$ can be evaluated on a binary pattern of length $n$ by evaluating each indeterminate $x_i$ at the 0/1 value of the $i^{\mathrm{th}}$ neuron. For example, if $f = x_1 x_3 (1-x_2) \in \F_2[x_1,\ldots,x_4]$, then $f(1011) = 1$ and $f(1100) = 0$.  

It is natural to then consider the ideal 
$$I_\C \od \{ f \in \F_2[x_1,\ldots,x_n]~|~f(c)=0,~\forall~c \in \C\}$$
of polynomials that vanish on a neural code $\C$.  However, this ideal contains extraneous \emph{Boolean relations} $\B=\langle x_i(1-x_i) \rangle$ that do not capture any information specific to the code.  Thus we turn instead to the neural ideal $J_\C$, first introduced in \cite{neural_ring}, which captures all the information in $I_\C$ that is specific to the code, thus omitting the Boolean relations.  More precisely, the neural ideal can be defined in terms of characteristic functions of non-codewords:
$$J_\C \od \langle \chi_{\vv}~|~\vv \in \F_2^n \setminus \C \rangle$$
where $\chi_{\vv}$ is the characteristic function
\begin{eqnarray}\label{eq:char-pseudo-monomials}
\chi_{\vv} \od \hspace{-.1in}\displaystyle\prod_{\{i | v_i=1\}} \hspace{-.1in} x_i \prod_{\{j | v_j=0\}} \hspace{-.1in}(1-x_j).
\end{eqnarray}
Note that the variety of both $I_\C$ and $J_\C$ is precisely the code $\C$ \cite{neural_ring}.  

The characteristic functions used to define the neural ideal are examples of {\em pseudo-monomials}, polynomials $f\in\F_2[x_1, \ldots,x_n]$ that can be written in the form 
$$f =  x_\sigma \prod_{j\in \tau} (1-x_j),$$
where $x_\sigma \od  \prod_{i \in \sigma} x_i$ and $\sigma,\tau \subset [n]$ with $\sigma\cap \tau=\emptyset$. 
Pseudo-monomials in $J_\C$ come in two types\footnote{There is a third type (see \cite{neural_ring}), but this is eliminated by our convention that $00\cdots0\in \C$.}:
\begin{itemize}
\item Type 1: $x_\sigma$, for $\sigma \neq \emptyset$, and
\item Type 2: $x_\sigma \prod_{i \in \tau} (1-x_i)$, for $\sigma,\tau \neq \emptyset, \textrm{ with } \sigma \cap \tau =\emptyset$.
\end{itemize}

For any ideal $J \subseteq \F_2[x_1,\ldots,x_n]$, a pseudo-monomial $f \in J$ is called \emph{minimal} if there does not exist another pseudo-monomial $g \in J$ with $\deg(g) < \deg(f)$ such that $f = hg$ for some $h \in \F_2[x_1,\ldots,x_n]$.  If $J$ is an ideal generated by a set of pseudo-monomials, 
the \emph{canonical form} of $J$ is the set of all minimal pseudo-monomials of $J$:
$$\CF(J) \od \{ f \in J \mid f \text{ is a minimal pseudo-monomial}\}.$$ 
For any neural code $\C$, the neural ideal $J_\C$ is generated by pseudo-monomials, and thus has a canonical form $\CF(J_\C)$.\footnote{Furthermore, every ideal generated by pseudo-monomials is actually the neural ideal of some neural code \cite{mo-student-thesis}.}
We denote the Type 1 and Type 2 pseudo-monomials of $\CF(J_\C)$ by $\CF^1(J_\C)$ and $\CF^2(J_\C)$, respectively, so that:
$$\CF(J_\C)= \CF^1(J_\C) \cup \CF^2(J_\C).$$

\begin{example}\label{ex:example1}
Consider the code $\C=\{0000,~0100,~0010,~0001,~1100,~1010,~0110,~1011\}$.  The neural ideal $J_\C$ is given by 
\begin{eqnarray*}
J_\C &=& \langle x_1(1-x_2)(1-x_3)(1-x_4), x_1x_4(1-x_2)(1-x_3), x_2x_4(1-x_1)(1-x_3), \\
&& x_3x_4(1-x_1)(1-x_2), x_2x_3x_4(1-x_1), x_1x_2x_4(1-x_3), x_1x_2x_3(1-x_4), x_1x_2x_3x_4 \rangle,
\end{eqnarray*}
which has canonical form $\CF(J_{\C})=\CF^1(J_{\C}) \cup \CF^2(J_{\C})$, where
$$ \CF^1(J_{\C})=\{x_1x_2x_3, x_2x_4\}  \hspace{.03in}\textrm{ and } \hspace{.03in} \CF^2(J_{\C})=\{x_1(1-x_2)(1-x_3),~x_1x_4(1-x_3),~x_3x_4(1-x_1)\}.$$
\end{example}

Note that $\CF(J_\C)$ is a generating set for $J_\C$, as every pseudo-monomial of $J_\C$ is a multiple of an element in $\CF(J_\C)$.  Furthermore, $\CF^1(J_\C)$ generates the ideal of monomials in $J_\C$, which is precisely the \emph{Stanley-Reisner ideal} of the associated simplicial complex $\Delta(\C)$, where
$$\Delta(\C) \od \{\sigma \subseteq [n] \mid \sigma \subseteq c \text{ for some } c \in \C\}$$
is the smallest abstract simplicial complex on $[n]$ that contains all elements of $\C$ \cite{neural_ring}. In particular, if $\C$ is a simplicial complex, then $J_\C$ is precisely the Stanley-Reisner ideal of $\C$.  Note that the \emph{facets} of $\Delta(\C)$, which are maximal elements of the simplicial complex under inclusion, correspond to the maximal codewords of $\C$.

The canonical form of a code $\C$ can be computed algorithmically; for example, \cite[Section 4.5]{neural_ring} provides an algorithm using primary decompositions of pseudo-monomial ideals.  A more efficient algorithm has since been proposed in \cite{NeuralIdealsSage}, with software publicly available \cite{github}.  Supplemental Text S1 gives full details for computing the canonical form of an example code by hand; for information on using software to compute $\CF(J_{\C})$, see \cite{NeuralIdealsSage}.

\subsubsection*{The code of a cover}
 Let $X$ be a topological space.  A collection of non-empty open sets $\U = \{U_1,\ldots,U_n\}$, where each $U_i \subset X$,
is called an {\it open cover}.  Given an open cover $\U$, the {\it code of the cover} is the neural code
$$\C(\U) \od \{ \sigma \subseteq [n]~\vert~U_\sigma \setminus \bigcup_{j \in [n] \setminus \sigma} U_j \neq \emptyset \},
$$ 
where $U_\sigma \od \bigcap_{i \in \sigma} U_i$.  We say that a code $\C$ is \emph{realized by} $\U$ if $\C=\C(\U)$.  Observe that $X$ is subdivided into regions defined by intersections of the open sets in $\U$.  
Each codeword in $\C(\U)$ then corresponds to a non-empty intersection that is not covered by other sets in $\U$ (see Example~\ref{ex:convex}).
By convention, the empty intersection $U_\emptyset = \bigcap_{i \in \emptyset} U_i$ equals $X$, so that $\emptyset \in \C(\U)$ if and only if $\bigcup_{i \in [n]} U_i \subsetneq X$.  We will assume $\bigcup_{i \in [n]} U_i \subsetneq X$, so that $00\cdots 0 \in \C$ (i.e., $\emptyset \in \C$), in agreement with our convention.  

It is important to note that $\C(\U)$ is not the same as the {\em nerve} $\N(U)$ of the cover, which consists of all non-empty intersections, regardless of whether the intersection region is covered by other sets:
$$\N(\U) \od \{\sigma \subseteq [n] \mid U_\sigma \neq \emptyset \}.$$
In fact, $\N(\U)= \Delta(\C(\U))$, the simplicial complex of the code \cite{neural_ring}.  The nerve of any cover $\U$ such that $\C = \C(\U)$ can thus be recovered directly from the code as $\Delta(\C)$, without reference to a specific cover.  The code $\C(\U)$, however, contains additional information about $\U$ that is not captured by the nerve alone (see \cite[Section 2.3.2]{neural_ring}).

\begin{example}\label{ex:convex}
Consider the configuration of sets $\U = \{U_1, \ldots, U_4\}$ shown in Figure~\ref{fig:convex-code}.  The code of the cover is $\C=\C(\U) = \{ 0000,~1000,~0100,~0010,~1100,~1001,~0110,~0101,~1101\}.$ Note that from $\C$ alone, we can detect that any realization must have $U_4 \subseteq U_1 \cup U_2$, since every codeword with a 1 in the 4th position has a 1 in the 1st or 2nd position as well.  However, this containment information is not available from the nerve $\N(\U)$.  
\begin{figure}[!ht]
\begin{centering}
\includegraphics[height=1.5in]{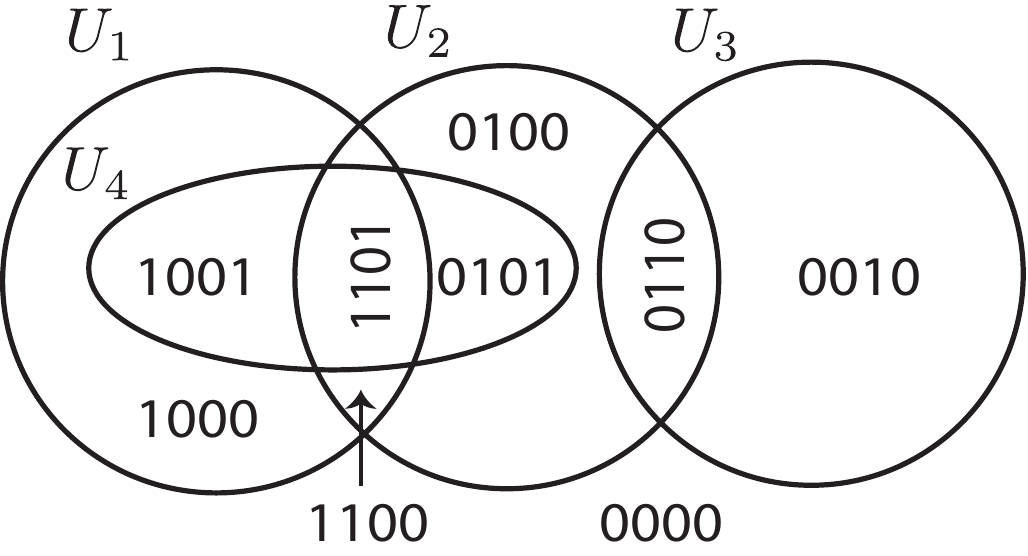}
\vspace{-.1in}
\caption{Code of the cover $\U = \{U_1, \ldots, U_4\}$.}
\label{fig:convex-code}
\end{centering}
\vspace{-.15in}
\end{figure} 
\end{example}

\paragraph{RF relationships and the neural ideal.}
Any realization of a code $\C$ by an open cover will satisfy relationships among the $U_i$ that are intrinsic to the code itself.  Because of the neuroscience motivation, where the $U_i$ model receptive fields, we call these \emph{receptive field relationships} \cite{neural_ring}.

\begin{definition}
For $\sigma, \tau \subseteq [n]$ with $\sigma \neq \emptyset$ and $\sigma \cap \tau = \emptyset$, we say that $(\sigma, \tau)$ is a \emph{receptive field (RF) relationship} of a code $\C$ if 
$$U_\sigma \subseteq \bigcup_{i \in \tau} U_i \, \, \, \,  \text{ and } \, \, \, U_\sigma \cap U_i \neq \emptyset \text{ for all } i \in \tau,$$ 
for any $\U = \{U_1, \ldots, U_n\}$ where $\C=\C(\U)$.  $\RF(\C)$ denotes the collection of RF relationships of $\C$.
\end{definition}

It is important to note that the receptive field relationships $\RF(\C)$ are strictly a function of the code itself and do not depend on any particular realization of $\C$ as $\C(\U)$.  Specifically, RF relationships correspond to pseudo-monomials in $J_\C$ as shown in Table~\ref{table:J_C-and-RF}, and thus are detectable algebraically without reference to a specific cover $\U$ \cite{neural_ring}.

\begin{table}[!ht] \label{table:types}
\begin{center}
\vspace{.1in}
\begin{small}
\begin{tabular}{llcl}
Relation type& \hspace{.08in}Pseudo-monomial && \hspace{-.1in}RF condition\\
 \specialrule{.2em}{.2em}{.2em} 
\hspace{.17in}Type 1 & $x_\sigma \in J_\C$ & \phantom{hi}$\Leftrightarrow$ \phantom{hi}& $U_\sigma = \emptyset$\\
 \specialrule{.05em}{.2em}{.2em} 
\hspace{.17in}Type 2 & $x_\sigma\prod_{i \in \tau} (1-x_i) \in J_\C$ & \phantom{hi}$\Leftrightarrow$ \phantom{hi}& $U_\sigma \subseteq \bigcup_{i \in \tau} U_i$ \\
\specialrule{.125em}{.2em}{.2em} 
 \end{tabular}
\caption{Types of pseudo-monomials in $J_\C$ and the corresponding conditions on receptive fields.  Note that the presence of a Type 2 pseudo-monomial $x_\sigma\prod_{i \in \tau} (1-x_i) \in J_\C$ is not sufficient to guarantee that $(\sigma, \tau)$ is actually an RF relationship.  Such a pseudo-monomial ensures the covering relationship $U_\sigma \subseteq \bigcup_{i \in \tau} U_i$, but to guarantee that $(\sigma, \tau) \in \RF(\C)$ for $\tau \neq \emptyset$ we must also have $x_\sigma x_i \notin J_\C$ for all $i \in \tau$.
}
\label{table:J_C-and-RF}
\vspace{-.15in}
\end{small}
\end{center}
\end{table}

The RF relationships of the form $(\sigma, \emptyset)$ capture when $U_\sigma =\emptyset$, and thus $\sigma \notin \N(\U)$, yielding a complete description of $\N(\U) = \Delta(\C)$.  In contrast, the RF relationships $(\sigma, \tau)$ for $\tau \neq \emptyset$ capture when an intersection is covered so that $\sigma \notin \C$ despite $\sigma \in \Delta(\C)$, thus measuring how $\C$ deviates from its simplicial complex.  

A RF relationship $(\sigma, \tau)$ is called \emph{minimal} if no neuron can be removed from $\sigma$ or $\tau$ without destroying the containment $U_\sigma \subseteq \bigcup_{i \in \tau} U_i$.   The following useful fact is a direct consequence of \cite[Theorem 4.3]{neural_ring}, which allows us to interpret the elements of $\CF(J_\C)$ as minimal RF relationships.

\begin{lemma}\label{lemma:minRF}
The pseudo-monomial $x_\sigma \prod_{i \in \tau} (1-x_i) \in \CF(J_\C)$ if and only if  $(\sigma,\tau)$ is a minimal RF relationship of $\C$.
\end{lemma}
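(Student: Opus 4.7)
The plan is to prove both directions by invoking the dictionary in Table~\ref{table:J_C-and-RF}, which translates pseudo-monomials in $J_\C$ into covering relations on the $U_i$, and then checking that the two notions of minimality --- minimality of a pseudo-monomial in $J_\C$ and minimality of an RF relationship --- match up under this translation. The slightly delicate part is the subtlety noted beneath Table~\ref{table:J_C-and-RF}: a Type 2 pseudo-monomial only encodes an RF relationship once the additional non-emptiness clause $U_\sigma \cap U_i \neq \emptyset$ for every $i \in \tau$ is verified.

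For the forward direction, assume $f = x_\sigma \prod_{i\in\tau}(1-x_i) \in \CF(J_\C)$. Table~\ref{table:J_C-and-RF} immediately yields $U_\sigma \subseteq \bigcup_{i\in\tau}U_i$. The non-emptiness clause is then forced by minimality of $f$: if $U_\sigma \cap U_j = \emptyset$ for some $j \in \tau$, then the containment $U_\sigma \subseteq \bigcup_{i\in\tau\setminus\{j\}}U_i$ already holds, so $x_\sigma \prod_{i\in\tau\setminus\{j\}}(1-x_i) \in J_\C$ is a proper pseudo-monomial divisor of $f$, contradicting $f \in \CF(J_\C)$. Finally, to show $(\sigma,\tau)$ is a \emph{minimal} RF relationship, I would remove any $j \in \sigma$ or $j \in \tau$, note that the resulting pseudo-monomial $x_{\sigma\setminus\{j\}}\prod_{i\in\tau}(1-x_i)$ (respectively $x_\sigma\prod_{i\in\tau\setminus\{j\}}(1-x_i)$) is a proper divisor of $f$ and hence not in $J_\C$, and translate this back through Table~\ref{table:J_C-and-RF} to conclude that the corresponding containment fails upon deletion.

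For the converse, suppose $(\sigma,\tau)$ is a minimal RF relationship. Then $U_\sigma \subseteq \bigcup_{i\in\tau}U_i$ together with Table~\ref{table:J_C-and-RF} puts $f \in J_\C$. Suppose, for contradiction, that $f$ is not minimal; pick a dividing pseudo-monomial $g = x_{\sigma'}\prod_{i\in\tau'}(1-x_i) \in J_\C$ with $\sigma'\subseteq\sigma$, $\tau'\subseteq\tau$, and $\sigma'\cup\tau' \subsetneq \sigma\cup\tau$. Applying Table~\ref{table:J_C-and-RF} to $g$ gives $U_{\sigma'}\subseteq \bigcup_{i\in\tau'}U_i$. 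Using the elementary monotonicities $U_{\sigma\setminus\{j\}}\subseteq U_{\sigma'}$ for any $j\in\sigma\setminus\sigma'$ (taking fewer intersections enlarges the set) and $\bigcup_{i\in\tau'}U_i \subseteq \bigcup_{i\in\tau\setminus\{j\}}U_i$ for any $j\in\tau\setminus\tau'$, I would produce either the covering $U_{\sigma\setminus\{j\}}\subseteq \bigcup_{i\in\tau}U_i$ or $U_\sigma \subseteq \bigcup_{i\in\tau\setminus\{j\}}U_i$, each contradicting minimality of $(\sigma,\tau)$.

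The main obstacle is the first step of the forward direction: one must cleanly extract the non-emptiness clause $U_\sigma \cap U_i \neq \emptyset$ from the combinatorial minimality hypothesis on $f$, which is precisely the content of the cautionary note below Table~\ref{table:J_C-and-RF}. Once this is in hand, the remainder of the argument reduces to routine bookkeeping on pseudo-monomial divisors and their translations via Table~\ref{table:J_C-and-RF}, and the Type~1 case $\tau=\emptyset$ falls out as a special instance in which the non-emptiness clause is vacuous.
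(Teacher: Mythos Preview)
The paper does not supply its own proof of this lemma; it simply records it as ``a direct consequence of \cite[Theorem~4.3]{neural_ring}'' and moves on. Your argument is a correct self-contained proof: you use the dictionary of Table~\ref{table:J_C-and-RF} to identify membership in $J_\C$ with the covering relation $U_\sigma \subseteq \bigcup_{i\in\tau}U_i$, and then match the two notions of minimality by observing that proper pseudo-monomial divisors of $x_\sigma\prod_{i\in\tau}(1-x_i)$ are exactly the $x_{\sigma'}\prod_{i\in\tau'}(1-x_i)$ with $\sigma'\subseteq\sigma$, $\tau'\subseteq\tau$, and $(\sigma',\tau')\neq(\sigma,\tau)$. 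The extraction of the non-emptiness clause $U_\sigma\cap U_i\neq\emptyset$ from minimality of $f$ is handled correctly, and the Type~1 case $\tau=\emptyset$ is indeed a degenerate instance. There is nothing further to compare, since the paper defers the argument entirely to the earlier reference; your proof is essentially what one would find upon unpacking that citation.
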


Thus, the canonical form gives a compact description of $J_\mathcal{C}$ that captures all the minimal intersection and containment relations that must exist among sets that give rise to the code.

\subsubsection*{Convex codes}
When the open cover $\U$ is contained in $\R^d$ for some $d$, the sets $U_i$ may (for some codes) be chosen to all be convex.  If this is possible, we say that the code is \emph{convex}:

\begin{definition} \label{def:convex}
Let $\C$ be a neural code on $n$ neurons.  If there exists an open cover $\U = \{U_1,\ldots,U_n\}$ such that $\C = \C(\U)$ and every $U_i$ is a convex subset of $\RR^d$ for a fixed $d$, then we say that $\C$ is {\em convex}. 
\end{definition}

Note that the code in Example~\ref{ex:convex} is convex since it can be realized via the convex sets shown in Figure~\ref{fig:convex-code}.  In contrast, the code from Example~\ref{ex:example1} is \underline{not} convex, as the following example shows.

\begin{example} \label{ex:code-local-obs}
Recall the code $\C=\{0000,~0100,~0010,~0001,~1100,~1010,~0110,~1011\}$ from Example~\ref{ex:example1}.  Neuron 1 always co-fires with neuron 2 or neuron 3 since a 1 only occurs in the first entry when it is accompanied by a 1 in the second or third entry.  This forces the RF relationship $U_1 \subseteq U_2 \cup U_3$ to hold in any realization of the code.  But neurons 1, 2, and 3 never co-fire, so $U_1\cap U_2\cap U_3 = \emptyset$.  Thus $U_1$ is the disjoint union of non-empty open sets $U_1 \cap U_2$ and $U_1 \cap U_3$, and so $U_1$ is disconnected.  Since any convex set is connected, we conclude that $U_1$ cannot be convex, and thus $\C$ is \underline{not} convex.  
\end{example}

This topological mismatch between the underlying set $U_1$ and its cover by $U_1 \cap U_2$ and $U_1 \cap U_3$ is an example of a \emph{local obstruction} \cite{MRC, no-go}; we define local obstructions precisely in Section~\ref{sec:local-obs}.  Notice that this local obstruction is immediately identifiable from the canonical form $\CF(J_\C)$ seen in Example~\ref{ex:example1}: the RF relationship $U_1 \subseteq U_2 \cup U_3$ is detectable from $x_1(1-x_2)(1-x_3) \in \CF^2(J_\C)$ and the RF relationship $U_1\cap U_2\cap U_3 = \emptyset$ is captured by $x_1x_2x_3 \in \CF^1(J_\C)$.  

\subsection{Summary of main results}\label{sec:main-results}
\paragraph{Detecting non-convex codes.}\label{sec:main-results-nonconvex}
Example~\ref{ex:code-local-obs} shows that some local obstructions to convexity can be detected algebraically from the neural ideal of a code.  In particular, any code satisfying the \emph{algebraic signature} $x_\sigma(1-x_i)(1-x_j)\in \CF^2(J_\C)$ and $x_\sigma x_i x_j \in \CF^1(J_\C)$ is guaranteed to be non-convex.  This is because $U_\sigma$ is forced to be disconnected since it is the disjoint union of the nonempty sets $U_\sigma \cap U_i$ and $U_\sigma \cap U_j$.  

Theorem~\ref{thm:A-sigs-non-convex} gives two additional algebraic signatures of local obstructions that force a code to be non-convex.  The first signature captures more generally when the nerve of a cover of $U_\sigma$ is disconnected, thus forcing $U_\sigma$ to be disconnected and non-convex.  The second signature captures cases when the nerve is a hollow simplex, thus forcing $U_\sigma$ to contain a hole.  In other words, these signatures capture when the nerve of the cover of $U_\sigma$ has a nontrivial 0th homology group and nontrivial top homology group, respectively.  It remains an open question to identify algebraic signatures that can detect when a relevant nerve has an \emph{intermediate} homology group that is nontrivial.

\begin{theorem}\label{thm:A-sigs-non-convex}
Let $\C$ be a code with neural ideal $J_\C$ and canonical form
 $\CF(J_\C) = \CF^1(J_\C)~\cup~\CF^2(J_\C)$, and let $G_{\C}(\sigma,\tau)$ be the simple graph on vertex set $\tau$ with edge set $\{ (ij) \in \tau \times \tau \mid x_\sigma x_i x_j \notin J_{\C} \}$.  The following algebraic signatures imply that $\C$ is \underline{not} convex.
\begin{table}[!h]
\begin{center}
\begin{small}
\begin{tabular}{l l c l}
& Algebraic signature of $J_\C$ & & Property of $\C$\\
 \specialrule{.125em}{.6em}{.6em} 
(i) & $\exists \; x_\sigma\prod_{i \in \tau}(1-x_i) \in \CF^2(J_\C)$ s.t. $G_\C(\sigma,\tau)$ is disconnected & $\Rightarrow$ & non-convex\\
 \specialrule{.05em}{.6em}{.6em} 
(ii) &  $\exists \; x_\sigma\prod_{i \in \tau}(1-x_i) \in \CF^2(J_\C)$ s.t. $x_\sigma x_\tau \in \CF^1(J_\C)$ & $\Rightarrow$& non-convex\\
 \specialrule{.125em}{.6em}{.6em} 
\end{tabular}
\caption{Algebraic signatures of non-convex codes.  
}
\label{table:A-sigs-non-convex}
\end{small}
\end{center}
\vspace{-.25in}
\end{table}
\end{theorem}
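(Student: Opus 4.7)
The plan is to prove both signatures by contradiction. Suppose $\C$ is convex with realization $\U=\{U_1,\ldots,U_n\}$ by open convex sets. In each case, Lemma~\ref{lemma:minRF} translates the hypothesis $x_\sigma \prod_{i \in \tau}(1-x_i) \in \CF^2(J_\C)$ into a minimal RF relationship: $U_\sigma \subseteq \bigcup_{i \in \tau} U_i$ and $U_\sigma \cap U_i \neq \emptyset$ for every $i \in \tau$. Moreover, $U_\sigma = \bigcap_{i \in \sigma} U_i$ is a nonempty intersection of open convex sets, hence itself open and convex (in particular, contractible and connected). The strategy for each signature is to use the additional algebraic data to show that $U_\sigma$ must nonetheless fail to be contractible, yielding the contradiction.

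For signature (i), I would analyze the open cover $\{U_\sigma \cap U_i\}_{i \in \tau}$ of $U_\sigma$. By Table~\ref{table:J_C-and-RF}, an edge $(ij)$ is absent from $G_\C(\sigma,\tau)$ precisely when $U_\sigma \cap U_i \cap U_j = \emptyset$. Disconnectedness of $G_\C(\sigma,\tau)$ then supplies a partition $\tau = \tau_1 \sqcup \tau_2$ such that $U_\sigma \cap U_i \cap U_j = \emptyset$ for every $i \in \tau_1$, $j \in \tau_2$. Setting $V_k = \bigcup_{i \in \tau_k}(U_\sigma \cap U_i)$, each $V_k$ is open and nonempty, $V_1 \cup V_2 = U_\sigma$ by the containment $U_\sigma \subseteq \bigcup_{i \in \tau}U_i$, and $V_1 \cap V_2 = \emptyset$ by the pairwise vanishing. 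Thus $U_\sigma$ is disconnected, contradicting its convexity.

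For signature (ii), the key tool is the Nerve Lemma applied to the cover $\{V_i\}_{i \in \tau}$ with $V_i = U_\sigma \cap U_i$. Every nonempty intersection $\bigcap_{i \in \tau'} V_i = U_{\sigma \cup \tau'}$ is convex, so this is a good cover of $U_\sigma$. Minimality of $x_\sigma x_\tau \in \CF^1(J_\C)$ says $U_{\sigma \cup \tau} = \emptyset$ while $U_{\sigma \cup \tau'} \neq \emptyset$ for every proper $\tau' \subsetneq \tau$; thus the nerve of $\{V_i\}_{i \in \tau}$ is the boundary of the $(|\tau|-1)$-simplex on vertex set $\tau$, which is homotopy equivalent to $S^{|\tau|-2}$. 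A short argument rules out $|\tau| = 1$ (which would force $U_\sigma$ to be simultaneously nonempty by minimality of $x_\sigma(1-x_i)$ and empty by $x_\sigma x_i \in J_\C$ together with $U_\sigma \subseteq U_i$), so $|\tau| \geq 2$ and $S^{|\tau|-2}$ is not contractible. The Nerve Lemma then yields $U_\sigma \simeq S^{|\tau|-2}$, contradicting the contractibility of the convex set $U_\sigma$.

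The main obstacle I anticipate is verifying the good-cover hypothesis of the Nerve Lemma cleanly and handling the small-$|\tau|$ boundary cases carefully — in particular noting that signature (ii) with $|\tau|=2$ overlaps with signature (i), and that $|\tau|=1$ is excluded by minimality considerations. A secondary point is being careful that each nonemptiness claim really follows from the \emph{minimality} built into the canonical form (via Lemma~\ref{lemma:minRF}), not merely from membership in $J_\C$; this is what lets us assert the full proper-subset pattern of nonvanishing intersections needed to identify the nerve as a hollow simplex.
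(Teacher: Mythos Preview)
Your proposal is correct and follows essentially the same approach as the paper. The paper routes both arguments through the language of \emph{local obstructions} --- showing that the link $\Lk_\sigma(\Delta|_{\sigma\cup\tau})$, which equals the nerve $\N(\{U_\sigma\cap U_i\}_{i\in\tau})$, has disconnected $1$-skeleton in case~(i) and is a hollow simplex in case~(ii), then invoking Lemma~\ref{lemma:CF-local-obs} --- whereas you apply the Nerve Lemma directly to the cover $\{U_\sigma\cap U_i\}_{i\in\tau}$; the underlying mathematics is identical.
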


It is important to note that although signature (i) in Table~\ref{table:A-sigs-non-convex} requires the construction of a graph based on the absence of pseudo-monomials from all of $J_\C$, this condition can actually be checked in a straightforward manner from $\CF^1(J_\C)$ alone (see Lemma~\ref{lemma:graph} in Section~\ref{thm:A-sigs-non-convex}).  The signatures of local obstructions in Theorem~\ref{thm:A-sigs-non-convex} can thus be directly detected from the canonical form of the code. The proof of Theorem~\ref{thm:A-sigs-non-convex} is given in Section~\ref{sec:non-convex}.   

Our previous work has given an alternative method of identifying the \underline{full} set of local obstructions; however, the recasting of those local obstructions in terms of RF relationships is less well understood.  A characterization of the full set of local obstructions of a code is given in Theorem 1.3 of \cite{MRC}.  In general, however, the absence of local obstructions does \underline{not} guarantee that $\C$ is convex \cite{counterexample}.  Thus, it is essential to have other methods of identifying convexity.

\paragraph{Detecting convex codes.}\label{sec:main-results-convex}
Currently the only known method for proving a code is convex is to produce a convex realization or establish that it belongs to a combinatorial family of codes for which a construction of a convex realization is known.  In the following, we give algebraic signatures for identifying when a code belongs to any of four combinatorial families of codes for which convex constructions are known.  

The simplest algebraic signatures of families of convex codes are $\CF^1(J_\C)=\emptyset$ or $\CF^2(J_\C)=\emptyset$.  Since $\CF^1(J_\C)$ captures minimal subsets missing from $\Delta(\C)$, the signature $\CF^1(J_\C)=\emptyset$ implies $\Delta(\C)$ is the full simplex, and so $\C$ must contain the all-ones word.  Convex realizations of such codes were given in \cite{MRC}.  When $\C$ contains the all-ones word ($\CF^1(J_\C)=\emptyset$), $\Delta(\C)$ has a single facet, and this fact is exploited in the construction of convex realizations of these codes.  More generally, if $\Delta(\C)$ has disjoint facets this same construction can be employed in parallel for each facet, ensuring these codes are also convex \cite{MRC}.  These codes can also be detected algebraically, but the signature is more complicated, so we save the statement and proof of the signature for Section \ref{sec:detecting-convex}.

On the other hand, $\CF^2(J_\C)=\emptyset$ implies that $\C$ is a simplicial complex, which is guaranteed to have a convex realization \cite{MRC, Tancer-survey}.  These codes can be generalized to a broader family of codes known as \emph{intersection-complete} codes, which are also known to be convex \cite{intersection-complete}.

\begin{definition}
A code $\C$ is \emph{intersection-complete} ($\cap$-complete) if every intersection of codewords is also a codeword in $\C$; i.e.\ $\sigma, \omega \in \C$ implies that $\sigma \cap \omega \in \C$.  
\end{definition}

The algebraic signature for $\cap$-complete codes is given in the following theorem, whose proof appears in Section~\ref{sec:detecting-convex}.

\begin{theorem}\label{thm:int-complete}
A code $\C$ is $\cap$-complete if and only if every pseudo-monomial $x_\sigma \prod_{i \in \tau} (1-x_i) \in \CF(J_\C)$ has $|\tau| \leq 1$.  If $\C$ is $\cap$-complete, then $\C$ is convex.  
\end{theorem}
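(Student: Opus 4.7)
The plan is to prove the biconditional by directly translating between the ideal-theoretic condition on $\CF(J_\C)$ and the combinatorial condition on $\C$, using the fact that a pseudo-monomial $x_\sigma \prod_{i\in \tau}(1-x_i)$ evaluates to $1$ on a codeword $c$ precisely when $\sigma \subseteq c$ and $c \cap \tau = \emptyset$. The convexity assertion then follows immediately by citing the main result of \cite{intersection-complete}.

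For the forward direction, I would assume $\C$ is $\cap$-complete and suppose, for contradiction, that some $f = x_\sigma \prod_{i \in \tau}(1-x_i) \in \CF(J_\C)$ has $|\tau| \ge 2$. Fix distinct $i, j \in \tau$. Since $f$ is minimal, dropping $(1-x_i)$ from $f$ (and, separately, $(1-x_j)$) produces a pseudo-monomial no longer in $J_\C$, so there exist codewords $c_i, c_j \in \C$ with $\sigma \subseteq c_i \cap c_j$, $c_i \cap \tau = \{i\}$, and $c_j \cap \tau = \{j\}$. By $\cap$-completeness, $c_i \cap c_j \in \C$; this codeword contains $\sigma$ and is disjoint from $\tau$, so $f$ evaluates to $1$ on it, contradicting $f \in J_\C$.

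For the reverse direction, assume every element of $\CF(J_\C)$ has $|\tau| \le 1$, and let $c_1, c_2 \in \C$. If $c_1 \cap c_2 \notin \C$, then the characteristic pseudo-monomial $\chi_{c_1 \cap c_2}$ lies in $J_\C$, so it is divisible (as a pseudo-monomial) by some $f = x_{\sigma'} \prod_{k \in \tau'}(1-x_k) \in \CF(J_\C)$ with $\sigma' \subseteq c_1 \cap c_2$ and $\tau' \cap (c_1 \cap c_2) = \emptyset$. I would then split cases on $|\tau'|$: if $|\tau'| = 0$, then $x_{\sigma'} \in J_\C$ forces $\sigma' \notin \Delta(\C)$, contradicting $\sigma' \subseteq c_1$; and if $\tau' = \{k\}$, then $x_{\sigma'}(1-x_k) \in J_\C$ forces every codeword containing $\sigma'$ also to contain $k$, giving $k \in c_1 \cap c_2$, which contradicts $\tau' \cap (c_1 \cap c_2) = \emptyset$. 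Hence $c_1 \cap c_2 \in \C$, establishing $\cap$-completeness.

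The main subtlety is in the forward direction, where the minimality of $f \in \CF(J_\C)$ has to be used carefully: for each $i \in \tau$, minimality produces a ``witness'' codeword $c_i \in \C$ containing $\sigma$ and meeting $\tau$ only at $i$, and $\cap$-completeness then lets us intersect two such witnesses to obtain a codeword in $\C$ that contains $\sigma$ yet is disjoint from $\tau$. Once this mechanism is identified, both directions reduce to straightforward bookkeeping about which codewords satisfy $\sigma \subseteq c$ and $c \cap \tau = \emptyset$.
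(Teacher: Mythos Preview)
Your proposal is correct and follows essentially the same approach as the paper's proof (Proposition~\ref{prop:int-complete}). The only difference is cosmetic: where the paper phrases the forward direction via restricted codes $\C|_{\sigma\cup\tau}$ and the reverse direction via receptive-field containments $U_\sigma \subseteq U_i$, you work directly with codewords and the evaluation of pseudo-monomials, which makes the argument slightly more self-contained but is logically identical.
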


Note that if $|\tau|=0$ for all elements of $\CF(J_\C)$, then $\CF^2(J_\C)=\emptyset$, which is the signature for simplicial complex codes.  Using Table~\ref{table:J_C-and-RF}, the algebraic signature in Theorem~\ref{thm:int-complete} can be reinterpreted in terms of receptive fields as follows: for any realization of an $\cap$-complete code $\C=\C(\U)$, every intersection $U_\sigma$ for $\sigma \notin \C$ is minimally covered by a single set $U_i$ for some $i \notin \sigma$.

The families of codes presented above, for which we have algebraic signatures, are special cases of \emph{max $\cap$-complete} codes: codes for which every intersection of a collection of facets of $\Delta(\C)$ is also a codeword in $\C$.   In \cite{intersection-complete}, convex realizations of max $\cap$-codes were constructed, guaranteeing their convexity.  

\begin{theorem} \cite[Theorem 4.4]{intersection-complete}\label{thm:max-int-complete}
If a code $\C$ is max $\cap$-complete, then $\C$ is convex.
\end{theorem}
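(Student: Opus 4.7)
The plan is to construct, for any max $\cap$-complete code $\C$, an explicit open convex cover $\U = \{U_1, \dots, U_n\}$ of some $\R^d$ with $\C(\U) = \C$, generalizing the realizations already available for simplicial complex codes and for $\cap$-complete codes (Theorem~\ref{thm:int-complete}).

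First I would enumerate the facets $F_1, \dots, F_k$ of $\Delta(\C)$ and list every non-empty facet intersection $F_J := \bigcap_{j \in J} F_j$ for $\emptyset \neq J \subseteq [k]$; by the max $\cap$-complete hypothesis, each such $F_J$ is a codeword of $\C$. Next I would place one point $p_J \in \R^d$ for each such $F_J$, with $d$ sufficiently large (for instance $d = k$) and the $p_J$ in general position. For each neuron $i$, I would then define $U_i$ to be a thin open convex neighborhood of $\operatorname{conv}\{p_J : i \in F_J\}$, thickened enough to be open but shrunk away from every $p_{J'}$ with $i \notin F_{J'}$. Finally, I would verify $\C(\U) = \C$ by showing that each $p_J$ contributes exactly the codeword $F_J$ and that any other non-empty intersection $U_\sigma$ forces $\sigma \subseteq F_J$ for some $J$, hence $\sigma \in \C$.

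The main obstacle is reconciling three competing demands: each $U_i$ must be convex; every codeword of $\C$ must appear as an atomic region of the arrangement; and no spurious codeword may be produced. The max $\cap$-complete hypothesis is precisely what makes this possible, since it guarantees that every intersection pattern of facets arising from the $p_J$ configuration is already a codeword of $\C$, so a sufficiently careful placement of the $p_J$'s together with a sufficiently gentle convex thickening cannot introduce illegal codewords. Formalizing the ``sufficiently gentle'' thickening---so that convexity of each $U_i$ is preserved while all extraneous intersections are eliminated---is the technical heart of the argument and is worked out in detail in \cite{intersection-complete}.
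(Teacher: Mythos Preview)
The paper does not prove Theorem~\ref{thm:max-int-complete} at all; it is quoted verbatim from \cite{intersection-complete} and used as a black box. So the only question is whether your sketch stands on its own as a plausible outline of the argument in \cite{intersection-complete}.

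It does not, and the gap is concrete. Your construction places a point $p_J$ only for each facet intersection $F_J$, and you then argue that (i) each $p_J$ yields the codeword $F_J$, and (ii) any other atomic region has support $\sigma \subseteq F_J$ for some $J$, ``hence $\sigma \in \C$.'' Step (ii) is simply false: $\sigma \subseteq F_J$ only gives $\sigma \in \Delta(\C)$, not $\sigma \in \C$. More seriously, your cover can miss codewords of $\C$ entirely. Take the code $\C_1$ from Example~\ref{ex:diff-generalizations}(a), which is max $\cap$-complete with facets $1110$ and $1011$ and facet intersection $1010$. Your recipe places three points $p_{1110},p_{1011},p_{1010}$; since neurons $1$ and $3$ lie in all three facet intersections, $U_1$ and $U_3$ are both (thickenings of) the full triangle, while $U_2$ and $U_4$ are small balls about $p_{1110}$ and $p_{1011}$ respectively. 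The resulting code of the cover is $\{0000,1010,1110,1011\}$, which omits the codewords $0100,0010,0001,1100,1001,0110,0011$ of $\C_1$. Thus neither direction of $\C(\U)=\C$ is handled: codewords that are not facet intersections are never produced, and your argument that no spurious codewords arise confuses $\Delta(\C)$ with $\C$.

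The construction in \cite{intersection-complete} is genuinely more delicate: one must realize \emph{every} codeword, not just the facet intersections, and this requires an inductive or stratified thickening in which lower codewords are carved out as one moves away from the points witnessing the facets, with the max $\cap$-complete hypothesis controlling what happens where several such strata meet. Your sketch captures the role of the hypothesis at the top level but omits the mechanism that actually generates the remaining codewords.
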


Finding an algebraic signature of max $\cap$-complete codes remains an open question.  Given that these codes generalize $\cap$-complete codes, one might hope to generalize the algebraic signature of $\cap$-complete codes to obtain a signature for this broader class.  One natural generalization is the class of codes for which every pseudo-monomial $x_\sigma \prod_{i \in \tau} (1-x_i) \in \CF(J_\C)$ has $|\tau| \leq 2$.  Unfortunately, Example~\ref{ex:diff-generalizations} (below) shows that codes with this property need not be max $\cap$-complete and vice versa.  In particular, the code in Example~\ref{ex:diff-generalizations}(b) has the $|\tau|\leq 2$ property, but is not even convex.  

\begin{example}\label{ex:diff-generalizations}
(a)  Consider the code $$\C_1=\{0000, 0100, 0010, 0001, 1100, 1010, 1001, 0110, 0011, 1110, 1011\}$$ with maximal codewords 1110 and 1011.  This code is max $\cap$-complete because it would in fact be a simplicial complex except that it is missing 1000, which is not an intersection of maximal codewords.  However, $\C_1$ does \underline{not} satisfy $|\tau|\leq 2$, since $\CF^2(J_{\C_1})= \{ x_1(1-x_2)(1-x_3)(1-x_4)\}$.\\

\noindent (b) Consider the code $$\C_2 = \{00000, 00100, 00010, 10100, 10010, 01100, 00110, 00011, 11100, 10110, 10011, 01111\}$$ with maximal codewords $11100, 10110, 10011, 01111$ (i.e.\ facets $\{123, 134, 145, 2345\}$). 
$\C_2$ is \underline{not} max $\cap$-complete since it does not contain the triple intersection of facets $1= 123\cap 134\cap 145$.  However, $\C_2$ satisfies $|\tau|\leq 2$ for all $x_\sigma\prod_{i\in \tau}(1-x_i) \in \CF^2(J_{\C_2})$ since 
$$\CF^2(\C_2) = \{x_2(1-x_3), \ x_5(1-x_4), \ x_2x_4(1-x_5), \ x_3x_5(1-x_2), \ x_1(1-x_3)(1-x_4)\}.$$
Interestingly, this code is \underline{not} convex, although it has no local obstructions \cite{counterexample}.

Note that the code from Example~\ref{ex:code-local-obs} also satisfies $|\tau|\leq 2$ and is not convex, but it has a local obstruction.  Thus, the signature $|\tau|\leq 2$ does not ensure convexity or provide guarantees about the presence/absence of local obstructions.
\end{example}

\subsection{Examples illustrating main results}\label{sec:examples-main-results}
This section gives examples of codes satisfying each of the algebraic signatures presented in Theorems~\ref{thm:A-sigs-non-convex} and \ref{thm:int-complete} together with an analysis of the implications of these signatures for RF relationships. 

We begin with an example of a code on $5$ neurons that satisfies the first signature in Theorem~\ref{thm:A-sigs-non-convex}.

\begin{example}[Theorem~\ref{thm:A-sigs-non-convex}, signature (i)]\label{ex:graph-sig}
Consider the code 
\begin{eqnarray*}
&\C=&\{00000, 11100, 10011, 01111\} \cup~ \{\textrm{all binary patterns with exactly two } 1s\}.
\end{eqnarray*}

\noindent This code has
$\CF^1(J_{\C})=\{x_1x_2x_4, x_1x_2x_5, x_1x_3x_4, x_1x_3x_5\}$ and
\begin{eqnarray*}
\hspace{-.28in} \CF^2(J_{\C})&=&\{ x_{i_1}(1-x_{i_2})(1-x_{i_3})(1-x_{i_4})(1-x_{i_5})~|~i_1,\ldots, i_5 \in [5]\}\\
&& \cup~\{ x_{i_1}x_{i_2}x_{i_3}(1-x_{i_4})~|~i_1,\ldots, i_4 \in [5]\setminus \{1\}\},
\end{eqnarray*}
\noindent where all the indices in the pseudo-monomials of $\CF^2(J_{\C})$ are distinct.
Consider 
$$x_1(1-x_2)(1-x_3)(1-x_4)(1-x_5) \in \CF^2(J_{\C}),$$ where $\sigma=\{1\}$ and $\tau=\{2,3,4,5\}$.  We will construct the graph $G=G_{\C}(\sigma, \tau)$ whose vertices are precisely the elements of $\tau$.  By definition, whenever $x_\sigma x_ix_j \notin J_{\C}$ for $i, j \in \tau$, then $(ij)$ is an edge in $G$.  Using $\CF^1(J_{\C})$, we immediately see that $(24), (25), (34),$ and $(35)$ are \emph{not} edges in $G$, and that $(23)$ and $(45)$ \emph{are} edges in $G$ (see Lemma~\ref{lemma:graph}).  Thus $G$ consists only of two disjoint edges, and is disconnected.  (Note that this implies that $U_1 \cap (U_2 \cup U_3)$ and $U_1 \cap (U_4 \cup U_5)$ are disjoint, and so $U_1$ is disconnected, as it is covered by the disjoint union of nonempty open sets.)  Therefore, signature (i) of Theorem~\ref{thm:A-sigs-non-convex} is satisfied and $\C$ is \underline{not} convex.
\end{example}

The next example gives a code on $4$ neurons satisfying the second signature of Theorem~\ref{thm:A-sigs-non-convex}.

\begin{example}[Theorem~\ref{thm:A-sigs-non-convex}, signature (ii)]\label{ex:hollow-simplex-sig}
Consider $\C=\{0000, 1110, 1101, 1011, 0111, 1100, 1010, 1001 \}$.  Then
$$\CF^1(J_{\C})=\{x_1x_2x_3x_4\} \hspace{.01in}\textrm{ and } \hspace{.01in} \CF^2(J_{\C})=\{ x_i(1-x_1)(1-x_j)~|~ i, j = 2,3,4;~i \neq j\}~\cup~\{x_1(1-x_2)(1-x_3)(1-x_4)\}.$$
Since $x_1(1-x_2)(1-x_3)(1-x_4) \in \CF^2(J_{\C})$ and $x_1x_2x_3x_4 \in \CF^1(J_{\C})$, we see that signature (ii) of Theorem~\ref{thm:A-sigs-non-convex} applies.  Thus $\C$ is \underline{not} convex.

To see the obstruction to convexity here, note that since $x_1(1-x_2)(1-x_3)(1-x_4) \in \CF^2(J_{\C})$ we have from Table~\ref{table:J_C-and-RF} that $U_1$ is minimally covered by $U_2 \cup U_3\cup U_4$.  Also, since $x_1x_2x_3x_4 \in \CF^1(J_{\C})$, the full intersection $U_1 \cap U_2 \cap U_3 \cap U_4$ is empty, but the minimality of elements in $\CF(J_\C)$ guarantees that every other intersection is non-empty.  This forces $U_1$ to contain a hole (see Figure~\ref{fig:hole}), and so $U_1$ cannot be convex, and hence $\C$ cannot be convex.

\begin{figure}[!ht]
\begin{centering}
\includegraphics[height=1.5in]{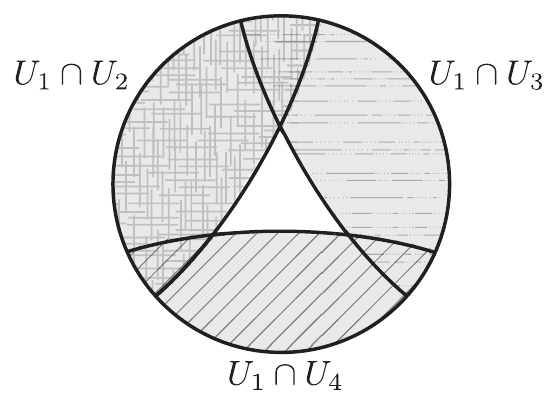}
\vspace{-.15in}
\caption{For the code in Example~\ref{ex:hollow-simplex-sig}, the set $U_1$ is the union of the shaded regions shown since it is covered by $(U_1 \cap U_2) \cup (U_1 \cap U_3) \cup (U_1 \cap U_4)$.  $U_1$ must contain a hole since the covering sets all pairwise intersect, but the full intersection is missing. }
\label{fig:hole}
\end{centering}
\end{figure} 
\end{example}

Finally, the following example shows how to use the neural ideal to detect that a code is $\cap$-complete, and thus convex.

\begin{example}[Theorem~\ref{thm:int-complete}]\label{ex:int-complete}
Consider $\C=\{00000, 11110, 10111, 01111, 10110, 01110, 00111, 00110\}$.  This code has
$$\CF^1(J_{\C})=\{x_1x_2x_5\} \hspace{.03in}\textrm{ and } \hspace{.03in} \CF^2(J_{\C})=\{x_i(1-x_j)~|~i \in [5];~j = 3,4;~i \neq j\}.$$
We immediately see that all elements of $\CF(J_{\C})$ satisfy $|\tau|\leq1$, and so the signature from Theorem~\ref{thm:int-complete} applies.  Thus, $\C$ is $\cap$-complete.
\end{example}

\section{Detecting local obstructions}

The primary method for showing that a code is not convex is to show that it has a local obstruction.  Section~\ref{sec:local-obs} defines local obstructions and connects them to links of certain restricted simplicial complexes.  Section~\ref{sec:non-convex} shows how to detect certain classes of local obstructions via $J_\C$ and $\CF(J_\C)$ and provides the proof of Theorem~\ref{thm:A-sigs-non-convex}.

\subsection{Local obstructions}\label{sec:local-obs}
Recall that the code in Example~\ref{ex:code-local-obs} failed to have a convex realization because the receptive field $U_1$ was covered by a pair of disjoint nonempty open sets $U_1 \cap U_2$ and $U_1 \cap U_3$, and thus no realization of $\C$ could have $U_1$ as a convex set.  In this case, the restricted cover of $U_1$ by $U_1 \cap U_2$ and $U_1 \cap U_3$ had a \emph{nerve} that was disconnected and thus, if $U_1$ were convex, there would be a topological mismatch between $U_1$ and the nerve of its restricted cover.  This topological mismatch is an example of a \emph{local obstruction}.  Specifically, the Nerve Lemma \cite[Corollary 4G.3]{Hatcher} guarantees that if $\U$ is a convex open cover (and thus a ``good cover"), then $U_\sigma$ must have the same homotopy type as $\N(\{U_\sigma \cap U_i\}_{i \in \tau})$ whenever $U_\sigma$ is non-empty and covered by a union of sets $\bigcup_{i \in \tau} U_i$, i.e.\ whenever $(\sigma, \tau) \in \RF(\C)$.  In particular, since $U_\sigma$ is the intersection of convex sets, it must be convex and hence \emph{contractible}\footnote{A set is {\it contractible} if it is homotopy-equivalent to a point, and every convex set is contractible \cite{Hatcher}.}, and thus $\N(\{U_\sigma \cap U_i\}_{i \in \tau})$ must also be contractible.  Thus, if the nerve of such a restricted cover is \underline{not} contractible, then a local obstruction is present. This restricted nerve has an alternative combinatorial formulation; specifically, 
$$\N(\{U_\sigma \cap U_i\}_{i \in \tau}) = \Lk_\sigma(\Delta|_{\sigma \cup \tau}),$$
where $\Delta|_{\sigma \cup \tau}$ is the \emph{restricted simplicial complex} 
$$\Delta|_{\sigma \cup \tau}\od \{\omega \in \Delta \mid \omega \subseteq \sigma \cup \tau\}$$
and the \emph{link} $\Lk_\sigma(\Delta|_{\sigma \cup \tau})$ is given by 
$$\Lk_\sigma(\Delta|_{\sigma \cup \tau}) = \{ \omega \in \Delta|_{\sigma \cup \tau}\mid \sigma\cap\omega = \emptyset \text{ and }\sigma \cup \omega \in \Delta|_{\sigma \cup \tau}\}.$$
This alternative characterization of the nerve yields the following formal definition of local obstruction.  For more details about local obstructions, see \cite[Section 3]{MRC}. 

\begin{definition}
Let $\C$ be a code on $n$ neurons with simplicial complex $\Delta$.  \\
For $\sigma, \tau \subseteq [n]$ with $\tau \neq \emptyset$, we say that $(\sigma, \tau)$ is a \emph{local obstruction} of $\C$ if $(\sigma, \tau) \in \RF(\C)$ and the link $\Lk_\sigma(\Delta|_{\sigma \cup \tau})$ is not contractible.
\end{definition}

As an immediate consequence of the Nerve Lemma, as described above, we obtain Lemma~\ref{lemma:loc-obs}.

\begin{lemma}\cite[Lemma 1.3]{MRC}\label{lemma:loc-obs}
If $\C$ has a local obstruction, then $\C$ is not a convex code.
\end{lemma}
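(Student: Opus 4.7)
The plan is to argue by contradiction, and essentially assemble the pieces already laid out in the preceding discussion into a clean deduction. Suppose $\C$ is a convex code with a local obstruction $(\sigma,\tau)$; I want to produce a contradiction with the Nerve Lemma.

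First I would fix a convex realization $\U=\{U_1,\ldots,U_n\}$ of $\C$. Since $(\sigma,\tau)\in\RF(\C)$ with $\tau\neq\emptyset$, by definition $U_\sigma\neq\emptyset$, $U_\sigma\subseteq\bigcup_{i\in\tau}U_i$, and $U_\sigma\cap U_i\neq\emptyset$ for every $i\in\tau$. So the collection $\mathcal{V}\od\{U_\sigma\cap U_i\}_{i\in\tau}$ is a genuine open cover of $U_\sigma$ by nonempty sets. Each member of $\mathcal{V}$ is an intersection of convex sets, hence convex, and any finite intersection of members of $\mathcal{V}$ is again a (possibly empty) intersection of convex sets, hence either empty or convex (and in particular contractible). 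Thus $\mathcal{V}$ is a good cover of $U_\sigma$ in the sense required by the Nerve Lemma.

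Next I would apply the Nerve Lemma (e.g.\ \cite[Corollary 4G.3]{Hatcher}) to conclude that $U_\sigma$ is homotopy equivalent to $\N(\mathcal{V})$. Because $U_\sigma=\bigcap_{i\in\sigma}U_i$ is itself a nonempty intersection of convex sets, $U_\sigma$ is convex and therefore contractible. Homotopy equivalence then forces $\N(\mathcal{V})$ to be contractible as well.

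Finally I would invoke the combinatorial identification $\N(\mathcal{V})=\Lk_\sigma(\Delta|_{\sigma\cup\tau})$ recorded just before the statement of the lemma: a subset $\omega\subseteq\tau$ satisfies $\bigcap_{i\in\omega}(U_\sigma\cap U_i)\neq\emptyset$ precisely when $U_{\sigma\cup\omega}\neq\emptyset$, which happens exactly when $\sigma\cup\omega\in\Delta=\Delta(\C)$, i.e.\ when $\omega\in\Lk_\sigma(\Delta|_{\sigma\cup\tau})$. Combining the two previous paragraphs, $\Lk_\sigma(\Delta|_{\sigma\cup\tau})$ is contractible, contradicting the hypothesis that $(\sigma,\tau)$ is a local obstruction. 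Hence no convex realization can exist.

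The only potentially delicate step is checking that the Nerve Lemma genuinely applies, which requires verifying the good-cover hypothesis for $\mathcal{V}$ as a cover of $U_\sigma$ (not of $\R^d$); but since arbitrary intersections of convex sets are convex and therefore contractible when nonempty, this is immediate. Everything else is a direct unwinding of the definitions of $\RF(\C)$, $\Delta(\C)$, and $\Lk_\sigma$.
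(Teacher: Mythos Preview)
Your proof is correct and follows exactly the approach the paper outlines in the paragraph immediately preceding the lemma: the paper simply declares the result ``an immediate consequence of the Nerve Lemma, as described above'' and cites \cite{MRC}, whereas you have carefully spelled out the verification that $\{U_\sigma\cap U_i\}_{i\in\tau}$ is a good cover of $U_\sigma$ and unpacked the identification $\N(\mathcal{V})=\Lk_\sigma(\Delta|_{\sigma\cup\tau})$. There is no substantive difference in strategy.
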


\subsection{Algebraic detection of local obstructions}\label{sec:non-convex}
In general, the presence of a pseudo-monomial $x_\sigma \prod_{i \in \tau} (1-x_i) \in J_\C$ is not sufficient to guarantee that $(\sigma, \tau)$ is a RF relationship (see Table~\ref{table:J_C-and-RF}), and thus a possible candidate for a local obstruction.  This is because we cannot guarantee that $U_\sigma \cap U_i \neq \emptyset$ for all $i \in \tau$.  However, when $x_\sigma \prod_{i \in \tau}(1-x_i) $ is minimal, i.e.\ when $x_\sigma \prod_{i \in \tau} (1-x_i) \in \CF^2(J_\C)$, these conditions are guaranteed and $(\sigma, \tau) \in \RF(\C)$.  Thus, we focus on the canonical form to algebraically detect local obstructions.

\begin{lemma}\label{lemma:CF-local-obs}
For a code $\C$, if there exists $(\sigma, \tau)$ such that $x_\sigma \prod_{i \in \tau} (1-x_i) \in \CF^2(J_\C)$ and $\Lk_\sigma(\Delta|_{\sigma \cup \tau})$ is not contractible, then $\C$ is \underline{not} convex.
\end{lemma}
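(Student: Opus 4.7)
The plan is to observe that this lemma is almost immediate from the tools already assembled, namely Lemma~\ref{lemma:minRF} (identifying $\CF(J_\C)$ with minimal RF relationships), Lemma~\ref{lemma:loc-obs} (local obstructions forbid convexity), and the definition of local obstruction. The strategy is therefore to convert the algebraic hypothesis into the set-theoretic hypothesis of the local obstruction definition, and then invoke the Nerve-Lemma-based machinery already in place.

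First I would unpack the hypothesis. Assume $x_\sigma \prod_{i \in \tau}(1-x_i) \in \CF^2(J_\C)$. By Lemma~\ref{lemma:minRF}, this is equivalent to $(\sigma,\tau)$ being a minimal RF relationship of $\C$; in particular $(\sigma,\tau) \in \RF(\C)$. Because $x_\sigma \prod_{i \in \tau}(1-x_i)$ lies in $\CF^2(J_\C)$ (not $\CF^1(J_\C)$), we have $\tau \neq \emptyset$, which is the nondegeneracy condition required by the definition of local obstruction. It is worth emphasizing here that the minimality built into $\CF^2$ is exactly what lets us conclude $U_\sigma \cap U_i \neq \emptyset$ for every $i \in \tau$, an issue flagged in Table~\ref{table:J_C-and-RF} that prevents us from drawing this conclusion directly from membership in $J_\C$ alone.

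Next I would combine this with the topological hypothesis. Since $(\sigma,\tau) \in \RF(\C)$, $\tau \neq \emptyset$, and $\Lk_\sigma(\Delta|_{\sigma \cup \tau})$ is not contractible, the definition of local obstruction given immediately before Lemma~\ref{lemma:loc-obs} is satisfied verbatim. Therefore $(\sigma,\tau)$ is a local obstruction of $\C$, and Lemma~\ref{lemma:loc-obs} yields that $\C$ is not convex.

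There is essentially no obstacle: the lemma is a clean packaging statement whose content is entirely in confirming that the ``$\CF^2$" hypothesis is genuinely stronger than mere membership in $J_\C$, so that the RF-relationship condition is satisfied without further argument. If any expansion were needed, it would be a brief reminder — perhaps even just a parenthetical — explaining why passing from $J_\C$ to $\CF^2(J_\C)$ closes the gap described in the caption of Table~\ref{table:J_C-and-RF}, after which the proof is a one-line chain of implications through Lemmas~\ref{lemma:minRF} and~\ref{lemma:loc-obs}.
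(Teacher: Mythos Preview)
Your proposal is correct and matches the paper's own justification essentially line for line: the paper treats this lemma as an immediate consequence of the fact that membership in $\CF^2(J_\C)$ guarantees $(\sigma,\tau)\in\RF(\C)$ (via Lemma~\ref{lemma:minRF}), after which non-contractibility of the link makes $(\sigma,\tau)$ a local obstruction and Lemma~\ref{lemma:loc-obs} applies. The paper does not even write a formal proof, folding this reasoning into the paragraph preceding the lemma, so your write-up is if anything more explicit than the original.
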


With this result, we can now prove Theorem~\ref{thm:A-sigs-non-convex}.  Specifically, we prove Theorem~\ref{thm:algebra-RF}, a broader result that also characterizes relevant RF conditions corresponding to these signatures.

\begin{theorem} \label{thm:algebra-RF}
If $\C$ has any of the algebraic signatures in rows A-1, A-2, A-3, or A-4 of Table~\ref{table:A-sigs-RF}, then $\C$ is not convex.  More precisely, each algebraic signature corresponds to a RF condition (as illustrated in Figure~\ref{fig:Thm2.4}), which implies that $\C$ is not convex.

\begin{table}[!h] \label{table:main}
\begin{center}
\begin{small}

\begin{tabular}{l l c l c l}
& Algebraic signature  & & Receptive field condition & & Property of $\C$\\
 \specialrule{.125em}{.2em}{.2em} 
\multirow{2}{*}{A-1} &   $\exists\; x_\sigma(1-x_i)(1-x_j) \in \CF^2(J_\C)$  & \multirow{2}{*}{$\Rightarrow$} & $(\sigma,\{i,j\}) \in \RF(\C)$ and &   \multirow{2}{*}{$\Rightarrow$} & \multirow{2}{*}{non-convex}\\
&  s.t. $x_\sigma x_i x_j \in J_\C$ & & $U_\sigma \cap U_i \cap U_j = \emptyset$ & &\\

 \specialrule{.05em}{.2em}{.2em} 
\multirow{2}{*}{A-2} & $\exists \; x_\sigma\prod_{i \in \tau}(1-x_i) \in \CF^2(J_\C)$  &  \multirow{2}{*}{$\Rightarrow$} & $(\sigma,\tau) \in \RF(\C)$  and & \multirow{2}{*}{$\Rightarrow$} & \multirow{2}{*}{non-convex}\\
&  s.t. $G_\C(\sigma,\tau)$ is disconnected & &$G_\U(\sigma,\tau)$ is disconnected & &\\

 \specialrule{.05em}{.2em}{.2em} 
\multirow{2}{*}{A-3} &  $\exists \; x_\sigma\prod_{i \in \tau}(1-x_i) \in \CF^2(J_\C)$  & \multirow{2}{*}{$\Rightarrow$} &$(\sigma,\tau) \in \RF(\C)$  and & \multirow{2}{*}{$\Rightarrow$} & \multirow{2}{*}{non-convex}\\
& s.t. $x_\sigma x_\tau \in \CF^1(J_\C)$ & & $U_\sigma \cap U_\tau = \emptyset$ but&   &\\
& & &$U_\sigma \cap U_{\tau'} \neq \emptyset \;\; \forall \; \tau' \subsetneq \tau$ &&\\

 \specialrule{.05em}{.2em}{.2em} 
\multirow{2}{*}{A-4} &  $\exists \; x_\sigma\prod_{i \in \tau}(1-x_i) \in \CF^2(J_\C)$, &  \multirow{2}{*}{$\Rightarrow$} & $(\sigma,\tau) \in \RF(\C)$ and &\multirow{2}{*}{$\Rightarrow$} & \multirow{2}{*}{non-convex}\\
&$\exists \;  \emptyset\subseteq\tilde\sigma \subseteq \sigma$ s.t. $x_{\tilde\sigma} x_\tau \in \CF^1(J_\C)$  but& & $U_\sigma \cap U_\tau \subseteq U_{\tilde\sigma} \cap U_\tau= \emptyset$&   &\\
& $x_{\sigma'}x_{\tau'} \notin \CF^1(J_\C)~\forall~\sigma'\subseteq \sigma,~\tau' \subsetneq \tau$&&but $U_\sigma \cap U_{\tau'} \neq \emptyset \;\; \forall \; \tau' \subsetneq \tau$ &&\\

 \specialrule{.125em}{.2em}{.2em} 
\end{tabular}
\caption{Algebraic signatures and receptive field conditions for non-convex codes. $G_{\C}(\sigma,\tau)$ is the simple graph on vertex set $\tau$ with edge set $\{ (ij) \in \tau \times \tau \mid x_\sigma x_i x_j \notin J_{\C} \}$.  The graph $G_\U(\sigma, \tau)$ has vertex set $\tau$ and edge set $\{(ij)\in \tau \times \tau \mid U_\sigma \cap (U_i\cap U_j) \neq \emptyset\}$.  }
\vspace{-.2in}
\label{table:A-sigs-RF}
\end{small}
\end{center}
\end{table}

\end{theorem}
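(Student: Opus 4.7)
The plan is to treat all four rows by reducing to Lemma~\ref{lemma:CF-local-obs}: since every row assumes $x_\sigma\prod_{i\in\tau}(1-x_i)\in\CF^2(J_\C)$, it suffices to show in each case that the link $L\od\Lk_\sigma(\Delta|_{\sigma\cup\tau})$ fails to be contractible. The first half of each RF condition, $(\sigma,\tau)\in\RF(\C)$, is immediate from Lemma~\ref{lemma:minRF}. The remaining RF statements follow tautologically from Table~\ref{table:J_C-and-RF}, which identifies each monomial containment with an emptiness statement about intersections of the $U_i$. So the real content is combinatorial: use
\[
L = \{\omega\subseteq\tau : x_\sigma x_\omega\notin J_\C\}
\]
to identify the link in each case and verify non-contractibility.

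For A-1 and A-2 (A-1 being the $|\tau|=2$ instance of A-2), the 1-skeleton of $L$ has edge set $\{(ij) : \sigma\cup\{i,j\}\in\Delta\} = \{(ij) : x_\sigma x_ix_j\notin J_\C\}$, which is exactly $G_\C(\sigma,\tau)$. Since a simplicial complex is connected iff its 1-skeleton is connected, disconnectedness of $G_\C(\sigma,\tau)$ forces disconnectedness of $L$ and hence non-contractibility. For the RF side, Table~\ref{table:J_C-and-RF} gives $U_\sigma\cap U_i\cap U_j\neq\emptyset\Leftrightarrow x_\sigma x_ix_j\notin J_\C$, so $G_\U(\sigma,\tau)=G_\C(\sigma,\tau)$ in every realization, and disconnectedness of $G_\U$ follows.

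For A-3, $x_\sigma x_\tau\in\CF^1(J_\C)$ means $\sigma\cup\tau\notin\Delta$ but every proper subset of $\sigma\cup\tau$ is in $\Delta$ by minimality of the $\CF^1$ generator; in particular $\sigma\cup\tau'\in\Delta$ for each $\tau'\subsetneq\tau$. Hence $L$ consists of all proper subsets of $\tau$, so $L=\partial\Delta^\tau\simeq S^{|\tau|-2}$, non-contractible once $|\tau|\geq 2$. The case $|\tau|=1$ is vacuous: $x_\sigma(1-x_i)\in\CF^2$ together with $x_\sigma x_i\in\CF^1$ would force $\emptyset\neq U_\sigma\subseteq U_i$ and $U_\sigma\cap U_i=\emptyset$ simultaneously, a contradiction.

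The main obstacle is A-4, since $x_\sigma x_\tau$ is \emph{not} assumed to be in $\CF^1$; only some $x_{\tilde\sigma}x_\tau$ with $\tilde\sigma\subseteq\sigma$ is. The divisibility $x_{\tilde\sigma}x_\tau\mid x_\sigma x_\tau$ still forces $x_\sigma x_\tau\in J_\C$, hence $\tau\notin L$. The subtler inclusion is that $x_\sigma x_{\tau'}\notin J_\C$ for every $\tau'\subsetneq\tau$: any $\CF^1$-divisor of $x_\sigma x_{\tau'}$ would split as $x_{\sigma'}x_{\tau''}$ with $\sigma'\subseteq\sigma$ and $\tau''\subseteq\tau'\subsetneq\tau$, directly contradicting the side-condition that $x_{\sigma'}x_{\tau'}\notin\CF^1(J_\C)$ for all $\sigma'\subseteq\sigma$, $\tau'\subsetneq\tau$. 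Thus $L=\partial\Delta^\tau$ once more, yielding non-contractibility (with $|\tau|\geq 2$ by the same argument as in A-3), and the corresponding RF translation follows entry-by-entry from Table~\ref{table:J_C-and-RF}.
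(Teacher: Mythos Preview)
Your proposal is correct and follows essentially the same route as the paper: both reduce each signature to Lemma~\ref{lemma:CF-local-obs} by showing that $\Lk_\sigma(\Delta|_{\sigma\cup\tau})$ is non-contractible (disconnected for A-1/A-2 via the 1-skeleton $G_\C(\sigma,\tau)$, and equal to $\partial\Delta^\tau$ for A-3/A-4 via the $\CF^1$ divisibility argument). Your explicit handling of the $|\tau|=1$ vacuity is a nice addition that the paper leaves implicit.
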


\begin{figure}[!ht]
\hspace{-.2in}\includegraphics[width=1.05\textwidth]{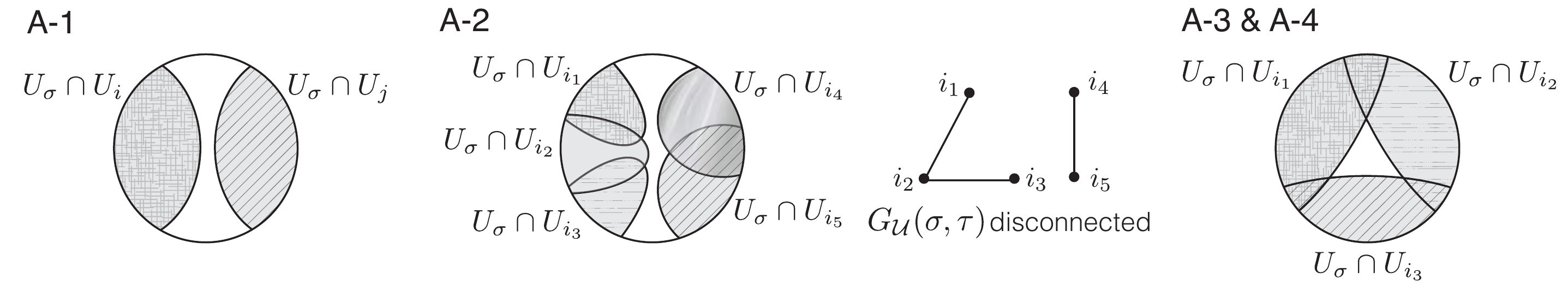}
\vspace{-.3in}
\caption{Illustrations of the RF conditions implied by signatures A-1 through A-4 in Theorem~\ref{thm:algebra-RF} (see Table~\ref{table:A-sigs-RF}).  In each picture, $U_\sigma$ is the union of the shaded regions; thus $U_\sigma$ is not contractible and hence not convex.  
}
\label{fig:Thm2.4}
\end{figure}

\FloatBarrier

\begin{proof}[Proof of Theorem~\ref{thm:algebra-RF}]
(A-1) By Lemma \ref{lemma:minRF}, $x_\sigma(1-x_i)(1-x_j) \in \CF^2(J_\C)$ implies $(\sigma,\{i,j\}) \in \RF(\C)$, and thus $U_\sigma \subseteq U_i \cup U_j$ and both $U_\sigma \cap U_i$ and $U_\sigma \cap U_j$ are non-empty.  Recall from Table~\ref{table:J_C-and-RF} that $x_\sigma x_i x_j \in J_\C$ implies $U_\sigma \cap U_i \cap U_j = \emptyset$.  Thus, $U_\sigma$ is the disjoint union of non-empty open sets $U_\sigma \cap U_i$ and $U_\sigma \cap U_j$, and so $U_\sigma$ is disconnected.  Thus, $U_\sigma = \bigcap_{k \in \sigma} U_k$ is not convex, and so some $U_k$ is not convex.  Hence $\C$ is not convex.

(A-2) By Table \ref{table:J_C-and-RF} if $\C = \C(\U)$, then $G_\C(\sigma,\tau) = G_\U(\sigma,\tau)$ since $x_\sigma x_i x_j \notin J_\C$ precisely when $U_\sigma \cap U_i \cap U_j \neq \emptyset.$  Furthermore, this graph is precisely the $1$-skeleton\footnote{The 1-skeleton of a simplicial complex is the subcomplex consisting of all faces of dimension at most 1, i.e.\ the vertices and edges of the simplicial complex; thus the 1-skeleton is the underlying graph of the simplicial complex (see e.g.\ \cite{Hatcher}).}  of $\Lk_\sigma(\Delta|_{\sigma \cup \tau})$.  Since we assume this is disconnected, it follows that $\Lk_\sigma(\Delta|_{\sigma \cup \tau})$ is not contractible, and hence $\C$ is non-convex by Lemma~\ref{lemma:CF-local-obs}.  Alternatively, $G_\U(\sigma,\tau)$ disconnected implies that $U_\sigma$ is disconnected, and hence $\C$ cannot be convex.  

(A-3) The signature for A-3 is a special case of that for A-4 since $x_\sigma x_\tau \in \CF^1(J_\C)$ guarantees $x_{\sigma'}x_{\tau'} \notin \CF^1(J_\C)$ for all $\sigma'\subseteq \sigma,~\tau' \subsetneq \tau$ by minimality of the elements in the canonical form.  Thus, we prove non-convexity of these codes via the following proof of A-4.

(A-4) Note that $x_{\tilde\sigma} x_\tau \in \CF^1(J_\C)$ implies that $U_{\tilde\sigma} \cap U_\tau= \emptyset$ by Table~\ref{table:J_C-and-RF} and thus $U_\sigma \cap U_\tau = \emptyset$ as well.  Thus, $\sigma \cup \tau \notin \Delta(\C)$ and so $\tau \notin \Lk_\sigma(\Delta|_{\sigma \cup \tau})$.  For every $\tilde\tau \subsetneq \tau$, we have $x_\sigma x_{\tilde\tau} \notin J_\C$, since if it were in $J_\C$ then some factor of it must be in $\CF^1(J_\C)$, but $x_{\sigma'}x_{\tau'} \notin \CF^1(J_\C)$ for every $\sigma'\subseteq \sigma$ and $\tau' \subseteq \tilde\tau$.  Thus, for all $\tilde\tau\subsetneq \tau$, we have $\sigma \cup \tilde\tau \in \Delta(\C)$ and so $\tilde\tau \in \Lk_\sigma(\Delta|_{\sigma \cup \tau})$; equivalently $U_\sigma \cap U_{\tilde \tau} \neq \emptyset$ for all $\tilde\tau\subsetneq \tau$.  This means $\Lk_\sigma(\Delta|_{\sigma \cup \tau})$ is a simplex missing only the top dimensional face $\tau$ (i.e.\ a hollow simplex), and so is homotopy-equivalent to a sphere, and thus is not contractible.  At the level of RF relationships, this implies that $U_\sigma$ is not contractible since it must contain a hole.  Thus, $\C$ is non-convex.

\end{proof}

As the proof of Theorem~\ref{thm:algebra-RF} illustrates, signature A-1 captures cases where $U_\sigma$ is disconnected by a pair of sets.  Signature A-2 generalizes A-1 and detects all cases where $U_\sigma$ is minimally covered by a collection of sets $U_\sigma \cap U_i$ for $i \in \tau$ in a way that forces $U_\sigma$ to be disconnected.  Note that A-2 is signature (i) from Theorem~\ref{thm:A-sigs-non-convex} in the main results (Section~\ref{sec:main-results-nonconvex}).  

Signature A-3 captures a particular case when $U_\sigma$ is minimally covered by a collection of sets $U_\sigma \cap U_i$ for $i \in \tau$ and $\Lk_\sigma(\Delta|_{\sigma \cup \tau}) = \N(\{U_\sigma \cap U_i\}_{i \in \tau})$ is a hollow simplex.  Specifically, in the case of A-3, $U_\sigma \cap U_\tau$ is the \underline{minimal} missing intersection in that for all $\tilde\sigma \subsetneq \sigma$, we have $U_{\tilde\sigma} \cap U_\tau \neq \emptyset$; thus everywhere outside of $U_\sigma$, $U_\tau$ has a non-empty intersection with each subcollection of sets from $\sigma$.  More generally, signature A-4 captures \underline{all} cases when $\Lk_\sigma(\Delta|_{\sigma \cup \tau})$ is a hollow simplex.  Specifically, the signature for A-4 does not require the minimality of the empty intersection $U_\sigma \cap U_\tau$, and so there may be a $\tilde\sigma \subsetneq \sigma$ such that $U_{\tilde\sigma} \cap U_\tau = \emptyset$, in particular we may have $U_\tau = \emptyset$.  All that is required is that every intersection of $U_\sigma$ with each proper subcollection of sets in $\tau$ is non-empty, which is guaranteed by ensuring that $U_{\sigma'} \cap U_{\tau'} \neq \emptyset$, for all $\sigma' \subseteq \sigma$ and $\tau' \subsetneq \tau$.  Signature (ii) from Theorem~\ref{thm:A-sigs-non-convex} is A-3, a special case of A-4, and so the proof of Theorem~\ref{thm:algebra-RF} completes the proof of Theorem~\ref{thm:A-sigs-non-convex}.

Note that although checking signatures A-1 and  A-2 requires determining the absence of pseudo-monomials from $J_\C$, these conditions can actually be checked from $\CF^1(J_\C)$ alone as Lemma~\ref{lemma:graph} shows.  \emph{Thus, all the algebraic signatures in Theorem~\ref{thm:A-sigs-non-convex} can be checked directly via $\CF(J_\C)$.}\\

\begin{lemma}\label{lemma:graph}
Suppose $x_\sigma \prod_{k \in \tau} (1-x_k) \in \CF^2(J_\C)$.  Then for any $i,j \in \tau$ with $i \neq j$, 
$$x_\sigma x_i x_j \in J_\C\;\; \Leftrightarrow \;\; x_{\sigma'} x_i x_j \in \CF^1(J_\C) \;\; \text{for some}
\;\; \emptyset \subseteq \sigma' \subseteq \sigma.$$
\end{lemma}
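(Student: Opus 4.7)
The reverse direction ($\Leftarrow$) is immediate: if $x_{\sigma'} x_i x_j \in \CF^1(J_\C) \subseteq J_\C$ with $\sigma' \subseteq \sigma$, then $x_\sigma x_i x_j$ is a polynomial multiple of $x_{\sigma'} x_i x_j$ and hence lies in $J_\C$.

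For the forward direction ($\Rightarrow$), suppose $x_\sigma x_i x_j \in J_\C$. Since $\CF^1(J_\C)$ generates the monomial part of $J_\C$ (the Stanley--Reisner ideal of $\Delta(\C)$), some $x_{\tilde\sigma} \in \CF^1(J_\C)$ divides $x_\sigma x_i x_j$, so $\tilde\sigma \subseteq \sigma \cup \{i,j\}$. The goal is to show $\{i,j\} \subseteq \tilde\sigma$, since then $\tilde\sigma = \sigma' \cup \{i,j\}$ for some $\sigma' \subseteq \sigma$, yielding the desired conclusion. The plan is to rule out the other two possibilities by contradicting the minimality of $f := x_\sigma \prod_{k \in \tau}(1-x_k) \in \CF^2(J_\C)$. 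The easy case is $\tilde\sigma \cap \{i,j\} = \emptyset$, forcing $\tilde\sigma \subseteq \sigma$; then $x_{\tilde\sigma}$ is a pseudo-monomial in $J_\C$ of degree $|\tilde\sigma| \leq |\sigma| < |\sigma| + |\tau|$ (using $\tau \neq \emptyset$) that divides $f$, contradicting minimality.

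The main obstacle is the case where exactly one of $i, j$ lies in $\tilde\sigma$; by symmetry, say $\tilde\sigma = \sigma'' \cup \{i\}$ with $\sigma'' \subseteq \sigma$ and $j \notin \tilde\sigma$. The plan is to show that $f' := x_\sigma \prod_{k \in \tau \setminus \{i\}}(1-x_k)$ also lies in $J_\C$, which would contradict minimality of $f$ since $f = f' \cdot (1-x_i)$ and $\deg(f') = |\sigma| + |\tau| - 1 < \deg(f)$. To establish $f' \in J_\C$, I would invoke the combinatorial characterization from Table~\ref{table:J_C-and-RF}: a Type~2 pseudo-monomial $x_\alpha \prod_{k \in \beta}(1-x_k)$ lies in $J_\C$ precisely when no codeword $c \in \C$ satisfies $c \supseteq \alpha$ and $c \cap \beta = \emptyset$. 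Toward a contradiction, suppose some codeword $c$ satisfies $c \supseteq \sigma$ and $c \cap (\tau \setminus \{i\}) = \emptyset$. Since $f \in J_\C$ forbids $c \cap \tau = \emptyset$, we must have $i \in c$; but then $c \supseteq \sigma'' \cup \{i\}$, contradicting $x_{\sigma'' \cup \{i\}} \in J_\C$. Hence no such $c$ exists, $f' \in J_\C$, and the case is closed.
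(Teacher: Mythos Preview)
Your proof is correct and follows essentially the same case decomposition as the paper's: both split according to which of $i,j$ lie in the minimal monomial $\tilde\sigma$ and derive contradictions with minimality of $f$ in the bad cases. The only difference is in the key case $\tilde\sigma=\sigma''\cup\{i\}$: the paper argues algebraically, observing that $x_\sigma x_i\prod_{k\in\tau\setminus\{i\}}(1-x_k)\in J_\C$ (as a multiple of $x_{\tilde\sigma}$) and then adding it to $f$ to obtain $f'=x_\sigma\prod_{k\in\tau\setminus\{i\}}(1-x_k)\in J_\C$, whereas you obtain $f'\in J_\C$ combinatorially via the codeword interpretation from Table~\ref{table:J_C-and-RF}. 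Both arguments yield the same contradiction.
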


\begin{proof}
The backward direction ($\Leftarrow$) is immediate since $\CF(J_\C) \subseteq J_\C$. To see the forward direction $(\Rightarrow)$, suppose $x_\sigma x_i x_j \in J_\C.$  Then it is a multiple of some monomial $x_\omega \in \CF^1(J_\C)$ with $\omega \subseteq \sigma \cup \{i,j\}.$ There are four possibilities:\\

\noindent (1) $\omega \subseteq \sigma$.  Then $x_\omega$ divides $x_\sigma \prod_{k \in \tau} (1-x_k)$, contradicting the minimality of $x_\sigma \prod_{k \in \tau} (1-x_k)  \in \CF^2(J_\C).$\\

\noindent (2) $\omega = \sigma'\cup \{i\}$  for some $\sigma'\subseteq \sigma$.  Then $x_\sigma x_i \prod_{k \in \tau \setminus \{i\}} (1-x_k) \in J_\C,$ since it is a multiple of $x_\omega$, and hence
$$x_\sigma x_i \prod_{k \in \tau \setminus \{i\}} (1-x_k)  + x_\sigma \prod_{k \in \tau} (1-x_k) = 
x_\sigma \prod_{k \in \tau \setminus \{i\}} (1-x_k)  \in J_\C,$$
contradicting the minimality of $x_\sigma \prod_{k \in \tau} (1-x_k)  \in \CF^2(J_\C).$\\

\noindent (3) $\omega = \sigma'\cup \{j\}$ for some $\sigma'\subseteq \sigma$.  This argument is identical to the previous, leading to a contradiction.\\

\noindent Thus, the only viable possibility is:\\
\vspace{-.075in}

\noindent (4) $\omega = \sigma'\cup \{i,j\}$ for some $\sigma'\subseteq \sigma$, and thus $x_\omega = x_{\sigma'}x_ix_j\in \CF^1(J_\C)$.
\end{proof}

As mentioned at the beginning of this section, the algebraic signatures in Theorem~\ref{thm:A-sigs-non-convex} only consider minimal RF relationships as detectable by the canonical form.  The motivation for this is that other pseudo-monomials $x_\sigma \prod_{i \in \tau} (1-x_i)$ in the full ideal $J_\C$ are not guaranteed to correspond to RF relationships as $U_\sigma \cap U_i$ is not necessarily non-empty for all $i \in \tau$.  This begs the question of whether it is sufficient to only consider these minimal pseudo-monomials and minimal RF relationships.  Specifically, if for every pseudomonomial in $\CF^2(J_\C)$, we find that the links $\Lk_\sigma(\Delta|_{\sigma \cup \tau})$ are all contractible, does that guarantee that $\C$ has no local obstructions?  

Unfortunately, this is not the case.  Example~\ref{ex:non-CF-detectable} shows that there exist codes that have no local obstructions arising from pairs $(\sigma, \tau)$ with $x_\sigma \prod_{i \in \tau} (1-x_i) \in \CF^2(J_\C)$, and yet the codes still have local obstructions.

\begin{example} \label{ex:non-CF-detectable}
Consider the code $\C = \{0000, 1110, 1101, 1011, 0111\}$, where $\Delta=\Delta(\C)$ is the hollow simplex on four vertices, missing only the top-dimensional face (see Figure~\ref{fig:non-CF-detectable}A).  The canonical form is 
$$\CF(J_\C) = \{ x_1x_2x_3x_4\} \cup \{x_i(1-x_j)(1-x_k)\mid i,j,k \in [4] \text{ with } i \neq j \neq k\}.$$
The minimal RF relationships $(\sigma,\tau) = (\{i\},\{j,k\})$ detected by the canonical form all have corresponding links $\Lk_{i}(\Delta|_{\{i,j,k\}})$ that are equivalent to the simplex shown in Figure~\ref{fig:non-CF-detectable}B, and hence are contractible.  However, $\C$ has multiple local obstructions, and thus is not convex.  For example, observe that $(\{1,2\}, \{3,4\}) \in \RF(\C)$ since $(U_1 \cap U_2) \subseteq U_3 \cup U_4$ and $(U_1 \cap U_2) \cap U_i \neq \emptyset$ for $i=3, 4$, and $\Lk_{12}(\Delta|_{[4]})$ is the non-contractible disconnected graph in Figure~\ref{fig:non-CF-detectable}C.  Thus, $(\{1,2\}, \{3,4\})$ is a local obstruction, and so $\C$ cannot be convex.  Note that since $(\{1,2\}, \{3,4\})$ is a non-minimal RF relationship, its corresponding pseudo-monomial $x_1x_2(1-x_3)(1-x_4)$ is only in $J_\C$ and not in $\CF(J_\C)$.  

Similarly, $(\{1\}, \{2,3,4\})$ gives another local obstruction that is not detectable from the canonical form.  Specifically, $(\{1\}, \{2,3,4\}) \in \RF(\C)$ since $U_1 \subseteq U_2 \cup U_3 \cup U_4$ with $U_1 \cap U_i \neq \emptyset$ for each $i=2, 3, 4$, and $\Lk_{1}(\Delta|_{[4]})$ is the non-contractible hollow simplex shown in Figure~\ref{fig:non-CF-detectable}D.  In fact, it turns out that every non-maximal $\sigma \in \Delta$ has a related RF relationship that is a local obstruction (see \cite[Table 2 in Supplementary Text S1]{MRC}).

\begin{figure}[!ht]
\begin{centering}
\includegraphics[height=1.75in]{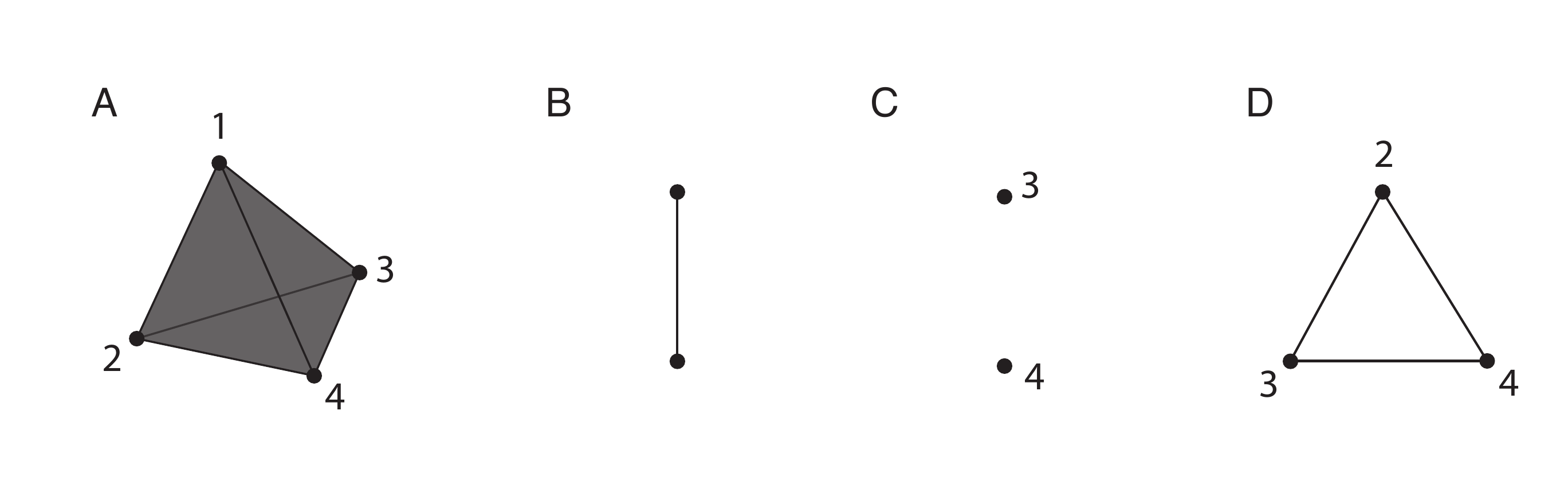}
\vspace{-.3in}
\caption{Simplicial complexes in Example~\ref{ex:non-CF-detectable}.  Note that the simplicial complex in (A) is missing the top-dimensional face $\{1,2,3,4\}$ and thus is a hollow simplex.}
\label{fig:non-CF-detectable}
\end{centering}
\end{figure} 
\end{example}

\section{Detecting convex codes}\label{sec:detecting-convex}
Recall that to prove a code is convex, it is not sufficient to show that it has no local obstructions \cite{counterexample}.  Currently the only known method for proving convexity is to construct a convex realization or show that the code belongs to a combinatorial family of codes that have been proven to be convex.  The broadest such family of codes is \emph{max $\cap$-complete} codes, for which every intersection of facets of $\Delta(\C)$ is a codeword in $\C$.  Currently, however, there is no efficient way to determine if a code is max $\cap$-complete.  Thus, we instead provide algebraic signatures of four combinatorial families of codes that all happen to be max $\cap$-complete, and thus are guaranteed to be convex by Theorem~\ref{thm:max-int-complete}. Moreover, these signatures can be checked efficiently.  Table~\ref{table:A-sigs-convex} summarizes these signatures together with the combinatorial property implied by each signature.  Section~\ref{sec:main-results-convex} provided sketches of proofs for signatures B-1 and B-2.  In this section, we prove B-3 and B-4.

\begin{table}[!ht]
\begin{center}
\begin{small}
\begin{tabular}{l l c l}
& Algebraic signature of $J_\C$ & & Property of $\C$\\
 \specialrule{.125em}{.2em}{.2em} 
 B-1 &  $\CF^1(J_\C) = \emptyset$ & $\Rightarrow$ & convex ($11\cdots1 \in \C$) \\
 \specialrule{.05em}{.2em}{.2em} 
 B-2 & $\CF^2(J_\C) = \emptyset$ & $\Rightarrow$ & convex ($\C=\Delta(\C)$)\\
  \specialrule{.05em}{.2em}{.2em} 
\multirow{2}{*}{B-3} & $\forall \; x_\sigma \in  \CF^1(J_\C)$, $|\sigma|= 2$, and &\multirow{2}{*}{$\Rightarrow$}  & convex\\
& if $x_ix_j \in \CF^1(J_\C)$, then $x_ix_k$ or $x_jx_k \in \CF^1(J_\C)$ for all $k \in [n]$&& ($\Delta(\C)$ has disjoint facets)\\
  \specialrule{.05em}{.2em}{.2em} 
B-4 & $\forall \; x_\sigma\prod_{i \in \tau}(1-x_i) \in \CF^2(J_\C),$ $|\tau| = 1$ & $\Rightarrow$ & convex ($\C$ is $\cap$--complete)\\
 \specialrule{.125em}{.2em}{.2em} 
\end{tabular}
\caption{Algebraic signatures of convex codes.}
\label{table:A-sigs-convex}
\end{small}
\end{center}
\vspace{-.15in}
\end{table}

\paragraph{Proof of B-3.}
Recall that signature B-1 captures when a code contain the all-ones word, and thus the corresponding simplicial complex has a single facet.  As a generalization, we consider codes whose simplicial complexes have disjoint facets, which are also provably convex \cite{MRC}.  In the following, we show these codes can be algebraically detected via the signature B-3, but first we need the following lemma.  

\begin{lemma}\label{lemma:disjoint-cliques}
The graph, or 1-skeleton, of $\Delta(\C)$ is a disjoint union of maximal cliques if and only if the following property holds:\\

\vspace{-.15in}
\hspace{1in} if $x_ix_j \in \CF^1(J_\C)$, then $x_ix_k$ or $x_jx_k \in \CF^1(J_\C)$ for all $k \in [n]$. \hspace{1in} $(*)$ 
\end{lemma}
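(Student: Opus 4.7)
The plan is to recast $(*)$ as a condition on the 1-skeleton $G$ of $\Delta(\C)$ and then invoke the standard graph-theoretic fact that a graph is a disjoint union of maximal cliques (a cluster graph) if and only if it contains no induced path $P_3$. Since $\CF^1(J_\C)$ minimally generates the Stanley--Reisner ideal of $\Delta(\C)$, a monomial $x_ix_j$ with $x_i, x_j \notin \CF^1(J_\C)$ lies in $\CF^1(J_\C)$ precisely when $\{i,j\}$ is a missing edge of $G$. Under the implicit assumption---natural in the B-3 setting, where every generator of $\CF^1(J_\C)$ is quadratic---that every element of $[n]$ is a vertex of $\Delta(\C)$, this identification becomes: $x_ix_j \in \CF^1(J_\C)$ iff $ij$ is a non-edge of $G$. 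Under this dictionary, $(*)$ reads: whenever $ij$ is a non-edge of $G$, then for every $k \in [n]$ either $ik$ or $jk$ is also a non-edge.

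For the forward direction, assume $G$ decomposes as a disjoint union of maximal cliques $K_1, \ldots, K_r$ and suppose $x_ix_j \in \CF^1(J_\C)$. Then $i, j$ lie in distinct cliques, say $i \in K_a$ and $j \in K_b$ with $a \neq b$. Any $k \in [n]$ lies in some $K_c$, and since $c$ cannot equal both $a$ and $b$, at least one of $ik, jk$ is a non-edge, giving $x_ix_k \in \CF^1(J_\C)$ or $x_jx_k \in \CF^1(J_\C)$; the degenerate cases $k \in \{i,j\}$ are immediate because then one of these monomials is simply $x_ix_j$ itself. Conversely, assume $(*)$ holds and suppose for contradiction that $G$ contains an induced $P_3$ on vertices $i, j, k$ with edges $ij, jk$ and non-edge $ik$. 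The non-edge yields $x_ix_k \in \CF^1(J_\C)$, so applying $(*)$ to the pair $(i,k)$ with witness $j$ forces $x_ix_j \in \CF^1(J_\C)$ or $x_jx_k \in \CF^1(J_\C)$, contradicting that both $ij$ and $jk$ are edges of $G$. Hence $G$ has no induced $P_3$, which is equivalent to $G$ being a disjoint union of maximal cliques.

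The main obstacle is pinning down the translation dictionary cleanly---specifically, verifying that $x_ix_j \in \CF^1(J_\C)$ genuinely corresponds to a missing edge of $G$ (rather than to a pair involving a dead neuron, which is ruled out by the no-singleton assumption) and handling the degenerate $k \in \{i,j\}$ cases. Once these bookkeeping points are in hand, both directions reduce to elementary verifications using the $P_3$-free characterization of cluster graphs.
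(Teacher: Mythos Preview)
Your proof is correct and follows essentially the same approach as the paper: both set up the dictionary $x_ix_j \in \CF^1(J_\C) \Leftrightarrow (ij)$ is a non-edge of the 1-skeleton (using the standing assumption that each $U_k \neq \emptyset$, so no $x_k$ lies in $J_\C$), and then both directions reduce to the $P_3$-free characterization of cluster graphs. The paper phrases the backward direction as a contrapositive (not a cluster graph $\Rightarrow$ find three vertices violating $(*)$) whereas you phrase it directly ($(*)$ $\Rightarrow$ no induced $P_3$), but the content is identical; your explicit handling of the degenerate cases $k \in \{i,j\}$ is a small bonus the paper omits.
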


\begin{proof}
Let $G$ be the underlying graph of $\Delta=\Delta(\C)$, i.e.\ its 1-skeleton.  Observe that $x_ix_j \in \CF^1(J_\C)$ precisely when $\{i,j\}$ is a minimal set missing from $\Delta$, and so $i$ and $j$ are vertices of $G$, but the edge $(ij)$ is missing from $G$. 
 
\noindent ($\Rightarrow$)  If $G$ is a disjoint union of maximal cliques, then whenever two vertices $i$ and $j$ are in distinct maximal cliques, no other vertex $k$ can be adjacent to both $i$ and $j$.  This means that whenever $x_ix_j \in \CF^1(J_\C)$, for every $k\in [n]$, at least one of $x_ix_k$ or $x_jx_k \in J_\C$.  Since these elements are minimal, we must have $x_ix_k$ or $x_jx_k \in \CF^1(J_\C)$ (because $x_k \notin J_\C$ for any $k$ since we assume $U_k \neq \emptyset$).

\noindent($\Leftarrow$)  We prove this by contrapositive.  Suppose that $G$ is not the disjoint union of maximal cliques.  Then there exist distinct vertices $i,j,k \in G$ that are connected, but do not form a clique; specifically, $(ik)$ and $(jk)$ are edges in $G$, but $(ij)$ is not.  Since $(ij) \notin G$, $x_ix_j \in \CF^1(J_\C)$, but neither $x_ix_k$ nor $x_jx_k$ is in $\CF^1(J_\C)$; thus violating the condition on $\CF^1(J_\C)$ from the statement.    
\end{proof}

Satisfying property (*) from Lemma~\ref{lemma:disjoint-cliques} alone is not sufficient to guarantee convexity, as the following example shows.
\begin{example}
Consider $\C = \{000000, 111000, 110100, 101100, 000011, 110000, 101000, 100100\}$ with 
$$\CF^1(J_\C) = \{x_1x_5,~x_1x_6,~x_2x_5,~x_2x_6,~x_3x_5,~x_3x_6,~x_4x_5,~x_4x_6,~x_2x_3x_4\} \text{ and} $$ 
$$\CF^2(J_\C) = \{x_1(1-x_2)(1-x_3)(1-x_4),~x_2(1-x_1),~x_3(1-x_1),~x_4(1-x_1),~x_5(1-x_6),~x_6(1-x_5)\}.$$
Observe that property $(*)$ from Lemma~\ref{lemma:disjoint-cliques} holds for $\CF^1(J_\C)$, and the graph of $\Delta(\C)$ is the disjoint union of the 4-clique on $\{1,2,3,4\}$ and the edge $(56)$.  $\C$ is not convex, however, because it satisfies signature A-4 via $x_1(1-x_2)(1-x_3)(1-x_4) \in \CF^2(J_\C)$ and $x_2x_3x_4 \in \CF^1(J_\C)$ which forces a local obstruction since $\Lk_1(\Delta|_{[4]})$ is the non-contractible hollow triangle.  
\end{example}

Despite not necessarily being convex, codes satisfying property $(*)$ from Lemma~\ref{lemma:disjoint-cliques} display an interesting relation.  Consider the \emph{co-firing relation} defined by $i \sim j$ if and only if neurons $i$ and $j$ ``co-fire" together in $\C$, i.e.\ $\{i,j\} \subseteq \sigma$ for some $\sigma \in \C$.  It is straightforward to check that this is an equivalence relation precisely when the code satisfies property $(*)$, as that condition on the 1-skeleton of $\Delta(\C)$ ensures transitivity. 

We now turn to the subclass of codes described in Lemma~\ref{lemma:disjoint-cliques} that have simplicial complexes with disjoint facets, and thus are guaranteed to be convex \cite{MRC}.  This will complete the proof of B-3.

\begin{proposition}\label{prop:disjoint-facets}
Given a code $\C$, its simplicial complex $\Delta(\C)$ has disjoint facets if and only if the following two properties hold:
\begin{enumerate}
\item For all $x_\sigma \in \CF^1(J_\C)$, we have $|\sigma| = 2$; and
\item If $x_ix_j \in \CF^1(J_\C)$, then $x_ix_k$ or $x_jx_k \in \CF^1(J_\C)$ for all $k \in [n]$.
\end{enumerate}
\end{proposition}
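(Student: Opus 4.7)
The plan is to prove both directions by leveraging Lemma~\ref{lemma:disjoint-cliques} together with the observation that condition (1) is precisely saying that $\Delta(\C)$ is a flag complex (all minimal non-faces have size 2), since the monomials in $\CF^1(J_\C)$ correspond to the minimal non-faces of $\Delta(\C)$.

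For the forward direction, suppose $\Delta(\C)$ has disjoint facets $F_1,\ldots,F_r$. Then $\sigma \in \Delta(\C)$ if and only if $\sigma \subseteq F_\ell$ for some $\ell$. To establish (1), I would take a minimal non-face $\sigma$ and argue that if $|\sigma|\geq 3$, then every pair $\{i,j\}\subseteq \sigma$ is a face, so lies in some facet $F_{\ell(i,j)}$; since the facets are disjoint, all these facets must coincide for pairs sharing a vertex, forcing $\sigma$ itself to lie in a single facet, a contradiction. For (2), I would take $x_ix_j \in \CF^1(J_\C)$, so $i\in F_a$, $j\in F_b$ with $a\neq b$, and for arbitrary $k\in[n]$ note that $k$ lies in some facet $F_c$ which can contain at most one of $i,j$, so at least one of $\{i,k\}$, $\{j,k\}$ is a pair spanning two distinct facets and is thus a non-face, hence (being of size $2$) a minimal non-face.

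For the reverse direction, assume (1) and (2). By Lemma~\ref{lemma:disjoint-cliques}, property (2) implies the $1$-skeleton of $\Delta(\C)$ is a disjoint union of maximal cliques $K_1,\ldots,K_r$. Property (1) says $\Delta(\C)$ is flag: a set $\omega\subseteq[n]$ belongs to $\Delta(\C)$ iff every pair in $\omega$ is an edge of the $1$-skeleton, i.e., iff $\omega$ is a clique in the $1$-skeleton. Since the cliques of a disjoint union of cliques are precisely the subsets of some $K_\ell$, the faces of $\Delta(\C)$ are exactly $\bigcup_\ell 2^{K_\ell}$, so the facets of $\Delta(\C)$ are $K_1,\ldots,K_r$, which are pairwise disjoint.

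The main obstacle is bookkeeping the minimality conditions for elements of $\CF^1(J_\C)$: one must verify that a pair of vertices lying in distinct facets (or more generally in distinct clique-components of the $1$-skeleton) actually corresponds to a minimal pseudo-monomial, using that singletons $x_k$ are not in $J_\C$ under the standing assumption that every neuron is realized by a non-empty set $U_k$. Apart from this, the argument is a clean combination of Lemma~\ref{lemma:disjoint-cliques} with the flag-complex characterization.
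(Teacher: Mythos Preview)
Your proposal is correct and follows essentially the same approach as the paper. Both arguments hinge on Lemma~\ref{lemma:disjoint-cliques} for property~(2) and on the observation that property~(1) is the flag-complex condition; the only difference is organizational---the paper packages both directions through the intermediate characterization ``$\Delta(\C)$ has disjoint facets $\Leftrightarrow$ the $1$-skeleton is a disjoint union of maximal cliques \emph{and} each such clique is a face,'' whereas you argue the forward direction for~(1) directly via a pigeonhole on which facet contains each pair of a putative large minimal non-face.
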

\begin{proof}
Observe that $\Delta=\Delta(\C)$ has disjoint facets if and only if $\Delta$ is a disjoint union of simplices. This occurs precisely when (a) the 1-skeleton of $\Delta$ is a disjoint union of maximal cliques and (b) each maximal clique is in $\Delta$ (i.e.\ each  maximal clique yields a simplex in $\Delta$).  

By Lemma~\ref{lemma:disjoint-cliques}, (a) holds if and only if Property 2 is satisfied.  Note that (b) holds if and only if every clique of the 1-skeleton is in $\Delta$, not just the maximal cliques, since $\Delta$ is closed under taking subsets.  Property 1 guarantees that $x_\omega \in J_\C$ (i.e.\ $\omega \notin \Delta$) if and only if $x_\omega$ is a multiple of $x_ix_j \in \CF^1(J_\C)$ for some $i, j \in \omega$, and so $\{i,j\} \notin \Delta$.  Thus Property 1 ensures $\omega \notin \Delta$ if and only if $\omega$ is missing an edge $(ij)$, and thus is \underline{not} a clique.  Hence (b) holds if and only if Property 1 is satisfied.
\end{proof}

Note that the signature in B-3 relies solely on $\CF^1(J_\C)$, which is a generating set for the Stanley-Reisner ideal of $\Delta(\C)$; thus, this property can be read off from the Stanley-Reisner ideal alone.

\paragraph{Proof of B-4.}  Recall from B-2 that a code $\C$ is a simplicial complex precisely when all pseudo-monomials in $\CF(J_\C)$ have $|\tau|=0$, and in this case $\C$ is provably convex, and consequently has no local obstructions.  It turns out that when all pseudo-monomials in $\CF^2(J_\C)$ have $|\tau|=1$, Corollary~\ref{cor:no-local-obs} shows that no local obstructions can arise involving the neurons in $\sigma$ and $\{i\}$.  To prove this, we first need the following two lemmas.

\begin{lemma}\label{lemma:J_C-algebra}
Let $\rho \in \F_2[x_1, \ldots, x_n]$ and $i \in [n]$.  If $\rho \notin J_\C$ but $\rho (1-x_i) \in J_\C$, then $\rho x_i \notin J_\C$.  
\end{lemma}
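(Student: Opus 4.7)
The plan is to prove this by contrapositive using the fact that we are working over $\F_2$, where addition is the same as subtraction and $1 + 1 = 0$. The key algebraic identity I will exploit is that $(1-x_i) + x_i = 1$ in $\F_2[x_1,\ldots,x_n]$, which upon multiplying both sides by $\rho$ yields the decomposition
\[
\rho \;=\; \rho(1-x_i) + \rho x_i.
\]
This decomposition is the entire engine of the proof.

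Given this identity, I would argue as follows. Suppose for contradiction that both $\rho(1-x_i) \in J_\C$ and $\rho x_i \in J_\C$. Since $J_\C$ is an ideal, it is closed under addition, and hence the sum $\rho(1-x_i) + \rho x_i = \rho$ lies in $J_\C$. This contradicts the hypothesis that $\rho \notin J_\C$. Therefore, assuming $\rho(1-x_i) \in J_\C$, we must have $\rho x_i \notin J_\C$, as desired.

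There is essentially no obstacle here; the only subtle point worth flagging is that the identity relies on working in characteristic $2$ (so that $-1 = 1$ and the two summands together give $1$ rather than some more complicated expression), but since the paper has set up $J_\C \subseteq \F_2[x_1,\ldots,x_n]$ from the start, this is automatic. The proof is a one-line ideal-theoretic observation and does not require invoking the structure of the canonical form, receptive field interpretations, or any of the combinatorial machinery of earlier sections.
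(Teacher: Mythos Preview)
Your proof is correct and essentially identical to the paper's: both use the identity $\rho = \rho(1-x_i) + \rho x_i$ together with closure of $J_\C$ under addition to derive a contradiction. One small remark: the identity $(1-x_i) + x_i = 1$ holds in any ring, so characteristic $2$ is not actually needed here.
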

\begin{proof}
If $\rho x_i \in J_\C$, then the sum $\rho(1-x_i) + \rho x_i \in J_\C$, but $\rho(1-x_i) + \rho x_i = \rho \notin J_\C$ by hypothesis.  Thus, $\rho x_i \notin J_\C$.
\end{proof}

\begin{lemma}\label{lemma:link-contractible}
For a code $\C$, if $x_\sigma \notin J_\C$ but $x_\sigma(1-x_i) \in J_\C$, then for any $\tilde\sigma \supseteq \sigma$ and any $\tau \supseteq \{i\}$, the link $\Lk_{\tilde\sigma}(\Delta|_{\tilde\sigma \cup \tau})$ is contractible. 
\end{lemma}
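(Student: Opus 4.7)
The plan is to show that $\Lk_{\tilde\sigma}(\Delta|_{\tilde\sigma \cup \tau})$ is a simplicial cone with apex $i$, and hence contractible. The key insight is that the hypothesis $x_\sigma(1-x_i) \in J_\C$ records the RF relationship $U_\sigma \subseteq U_i$, which forces adjoining the vertex $i$ to any face of $\Delta(\C)$ containing $\sigma$ to produce another face of $\Delta(\C)$.

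The first step is the algebraic identity $x_\sigma(1-x_i) = x_\sigma - x_{\sigma \cup \{i\}}$ (using the pseudo-monomial convention $i \notin \sigma$), which combined with the hypothesis yields the congruence $x_\sigma \equiv x_{\sigma \cup \{i\}} \pmod{J_\C}$. Multiplying by the monomial $x_{(\tilde\sigma \cup \omega)\setminus \sigma}$ for any $\omega \subseteq (\tilde\sigma \cup \tau) \setminus (\tilde\sigma \cup \{i\})$ gives $x_{\tilde\sigma \cup \omega} \equiv x_{\tilde\sigma \cup \omega \cup \{i\}} \pmod{J_\C}$. By Table~\ref{table:J_C-and-RF}, this is the combinatorial statement that $\tilde\sigma \cup \omega \in \Delta(\C)$ if and only if $\tilde\sigma \cup \omega \cup \{i\} \in \Delta(\C)$.

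I would then verify the cone structure under the natural assumption $i \notin \tilde\sigma$, which is automatic whenever $(\tilde\sigma, \tau)$ is an RF pair (since those require $\tilde\sigma \cap \tau = \emptyset$); the case $\tilde\sigma \notin \Delta(\C)$ produces an empty link and is only relevant vacuously in the corollary that consumes this lemma. The equivalence above, applied with $\omega = \emptyset$, shows $\{i\} \in \Lk_{\tilde\sigma}(\Delta|_{\tilde\sigma \cup \tau})$. For any $\omega$ in the link, $\omega \cup \{i\}$ still lies in $(\tilde\sigma \cup \tau) \setminus \tilde\sigma$, is disjoint from $\tilde\sigma$, and satisfies $\tilde\sigma \cup \omega \cup \{i\} \in \Delta(\C)$ by the equivalence. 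Thus every face of the link extends by $i$ to another face of the link, identifying the link as the cone on the subcomplex of faces not containing $i$, with apex $i$, and hence contractible.

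The only real care needed is the bookkeeping of disjointness among $\sigma$, $\tilde\sigma$, $\tau$, and $\{i\}$, so that the algebraic congruence cleanly translates into the combinatorial statement about $\Delta(\C)$; once that bookkeeping is in place, the cone argument is essentially one line, and I do not anticipate any substantive obstacle.
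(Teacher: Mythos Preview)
Your proposal is correct and follows essentially the same approach as the paper: both arguments show that $i$ is a cone point of $\Lk_{\tilde\sigma}(\Delta|_{\tilde\sigma\cup\tau})$ by using that $x_\sigma(1-x_i)\in J_\C$ forces $x_{\tilde\sigma\cup\omega}$ and $x_{\tilde\sigma\cup\omega\cup\{i\}}$ to agree modulo $J_\C$. Your congruence formulation is exactly the content of the paper's Lemma~\ref{lemma:J_C-algebra}, and your remarks about the edge cases $i\in\tilde\sigma$ and $\tilde\sigma\notin\Delta(\C)$ are apt---the paper leaves these implicit as well.
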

\begin{proof}
To show the link $\Lk_{\tilde\sigma}(\Delta|_{\tilde\sigma \cup \tau})$ is contractible, we show it is a cone.  Let $\tau' \in \Lk_{\tilde\sigma}(\Delta|_{\tilde\sigma \cup \tau})$; note $\tau' \subseteq \tau$.  This implies $\tilde\sigma\cup \tau' \in (\Delta|_{\tilde\sigma \cup \tau}) \subseteq \Delta(\C)$, and so $\rho = x_{\tilde\sigma}x_{\tau'} \notin J_\C$.  But $\rho(1-x_i) \in J_\C$ since it is a multiple of $x_\sigma(1-x_i) \in J_\C$.  Thus by Lemma~\ref{lemma:J_C-algebra}, $\rho x_i \notin J_\C$ which implies 
$\tilde\sigma\cup \tau' \cup \{i\} \in \Delta(\C)$, and so $\tau' \cup \{i\} \in \Lk_{\tilde\sigma}(\Delta|_{\tilde\sigma \cup \tau})$.  Thus, $i$ is a cone point of $\Lk_{\tilde\sigma}(\Delta|_{\tilde\sigma \cup \tau})$, and so the link is contractible.  
\end{proof}

\begin{corollary}\label{cor:no-local-obs}
For a code $\C$, if $x_\sigma(1-x_i) \in \CF(J_\C)$, then $(\tilde\sigma, \tau)$ is not a local obstruction of $\C$ for any $\tilde\sigma \supseteq \sigma$ and any $\tau \supseteq \{i\}$.
\end{corollary}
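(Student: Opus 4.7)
The plan is to derive this corollary almost immediately from Lemma~\ref{lemma:link-contractible}, so the work is really just verifying that the hypotheses of that lemma are triggered by the canonical-form assumption and then invoking the definition of local obstruction.

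First I would unpack what $x_\sigma(1-x_i) \in \CF(J_\C)$ gives us algebraically. By minimality of elements in the canonical form, no proper pseudo-monomial factor of $x_\sigma(1-x_i)$ can lie in $J_\C$. In particular, $x_\sigma$ itself is a pseudo-monomial of strictly smaller degree that divides $x_\sigma(1-x_i)$, so $x_\sigma \notin J_\C$. Thus we have both $x_\sigma \notin J_\C$ and $x_\sigma(1-x_i) \in J_\C$, which are exactly the hypotheses of Lemma~\ref{lemma:link-contractible}.

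Next I would apply Lemma~\ref{lemma:link-contractible} directly: for every $\tilde\sigma \supseteq \sigma$ and every $\tau \supseteq \{i\}$, the link $\Lk_{\tilde\sigma}(\Delta|_{\tilde\sigma \cup \tau})$ is contractible (in fact, a cone with apex $i$). Finally, recall from the definition in Section~\ref{sec:local-obs} that $(\tilde\sigma, \tau)$ is a local obstruction only when the link $\Lk_{\tilde\sigma}(\Delta|_{\tilde\sigma \cup \tau})$ is \emph{not} contractible. Since we have just shown contractibility for every such pair, the pair $(\tilde\sigma, \tau)$ fails the link condition and therefore cannot be a local obstruction, regardless of whether $(\tilde\sigma, \tau) \in \RF(\C)$.

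There is essentially no obstacle here; the only care needed is the minimality argument in the first step to legitimately conclude $x_\sigma \notin J_\C$ from membership of $x_\sigma(1-x_i)$ in $\CF(J_\C)$. Everything else is a mechanical invocation of Lemma~\ref{lemma:link-contractible} followed by the definition of local obstruction.
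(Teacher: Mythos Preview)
Your proposal is correct and matches the paper's proof essentially line for line: use minimality of the canonical-form element to get $x_\sigma \notin J_\C$, invoke Lemma~\ref{lemma:link-contractible} to obtain contractibility of $\Lk_{\tilde\sigma}(\Delta|_{\tilde\sigma \cup \tau})$, and conclude via the definition of local obstruction.
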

\begin{proof}
Observe that if $x_\sigma(1-x_i) \in \CF(J_\C)$, then by minimality $x_\sigma \notin J_\C$.  Thus by Lemma~\ref{lemma:link-contractible}, for any $(\tilde\sigma, \tau)$, the link $\Lk_{\tilde\sigma}(\Delta|_{\tilde\sigma \cup \tau})$ is contractible, and thus $(\tilde\sigma, \tau)$ cannot be a local obstruction.
\end{proof}

Corollary~\ref{cor:no-local-obs} shows that no RF relationship containing a minimal RF relationship $(\sigma, \{i\})$ (i.e.\ when $x_\sigma(1-x_i) \in \CF(J_\C)$) can produce a local obstruction.  Thus, if a code only has minimal RF relationships of the form $(\sigma, \emptyset)$ or $(\sigma, \{i\})$, then it has \underline{no} local obstructions, since every RF relationship $(\sigma, \tau)$ must contain one of these minimal RF relationships.  Such a code can be immediately identified from its canonical form, as every pseudo-monomial will satisfy $|\tau|\leq 1$.  Furthermore, Proposition~\ref{prop:int-complete} shows that these codes are precisely $\cap$-complete codes.  Since $\cap$-complete codes are max $\cap$-complete, Theorem~\ref{thm:max-int-complete} guarantees these codes are in fact convex beyond simply having no local obstructions.  This completes the proof of B-4 and Theorem~\ref{thm:int-complete} from Section~\ref{sec:main-results-convex}.

\begin{proposition}\label{prop:int-complete}
For a code $\C$, the following are equivalent:\
\begin{enumerate}[(1)]
\item Every pseudo-monomial $x_\sigma \prod_{i \in \tau} (1-x_i)$ in $\CF(J_\C)$ has $|\tau| \leq 1$, 
\item For each RF relationship $(\sigma, \tau)$ with $\tau \neq \emptyset$, there exists an $i \in \tau$ such that $(\sigma, \{i\})$ is also an RF relationship, and
\item  $\C$ is $\cap$-complete.
\end{enumerate}
\end{proposition}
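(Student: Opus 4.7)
The plan is to prove the three-way equivalence by showing (1) $\Leftrightarrow$ (2) via Lemma~\ref{lemma:minRF}, which identifies canonical-form elements with minimal RF relationships, and then showing (1) $\Leftrightarrow$ (3) by exploiting the minimality of canonical-form pseudo-monomials to produce codewords whose intersections realize, or fail to realize, the $\cap$-complete property.

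For (1) $\Rightarrow$ (2), I would take any RF relationship $(\sigma, \tau)$ with $\tau \neq \emptyset$. The pseudo-monomial $x_\sigma \prod_{i \in \tau}(1-x_i) \in J_\C$ is a multiple of some $f \in \CF(J_\C)$, say $f = x_{\sigma'}\prod_{j \in \tau'}(1-x_j)$ with $\sigma' \subseteq \sigma$ and $\tau' \subseteq \tau$. By (1), $|\tau'| \leq 1$; the case $\tau' = \emptyset$ would force $U_\sigma \subseteq U_{\sigma'} = \emptyset$, contradicting $U_\sigma \cap U_i \neq \emptyset$. Hence $\tau' = \{i\}$ for some $i \in \tau$, giving $U_\sigma \subseteq U_{\sigma'} \subseteq U_i$, so $(\sigma, \{i\})$ is an RF relationship. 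Conversely, if some $x_\sigma \prod_{i \in \tau}(1-x_i) \in \CF(J_\C)$ has $|\tau| \geq 2$, then $(\sigma, \tau)$ is a minimal RF relationship by Lemma~\ref{lemma:minRF}, so (2) provides $i \in \tau$ with $x_\sigma(1-x_i) \in J_\C$, a proper divisor of our canonical-form element, contradicting minimality.

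The heart of the argument is (1) $\Leftrightarrow$ (3). For (3) $\Rightarrow$ (1), I would suppose for contradiction that $x_\sigma \prod_{i \in \tau}(1-x_i) \in \CF(J_\C)$ with $|\tau| \geq 2$. Minimality forces $x_\sigma \prod_{i \in \tau\setminus\{j\}}(1-x_i) \notin J_\C$ for each $j \in \tau$, so there exists $c_j \in \C$ with $\sigma \subseteq c_j$ and $c_j \cap (\tau\setminus\{j\}) = \emptyset$; the fact that the full pseudo-monomial lies in $J_\C$ then forces $j \in c_j$, i.e.\ $c_j \cap \tau = \{j\}$. For distinct $i, j \in \tau$, the intersection $c_i \cap c_j$ contains $\sigma$ and is disjoint from $\tau$, so $x_\sigma \prod_{k \in \tau}(1-x_k)$ does not vanish on $c_i \cap c_j$; by $\cap$-completeness $c_i \cap c_j \in \C$, contradicting $x_\sigma \prod_{k \in \tau}(1-x_k) \in J_\C$. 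For (1) $\Rightarrow$ (3), given $\sigma, \omega \in \C$ with $\rho = \sigma \cap \omega$, I would assume $\rho \notin \C$ so that $\chi_\rho = x_\rho \prod_{j \notin \rho}(1-x_j) \in J_\C$. Dividing by some $f = x_{\sigma'}\prod_{j \in \tau'}(1-x_j) \in \CF(J_\C)$ gives $\sigma' \subseteq \rho$, $\tau' \subseteq [n]\setminus\rho$, and $|\tau'| \leq 1$. If $\tau' = \emptyset$, then $U_\sigma \subseteq U_{\sigma'} = \emptyset$, contradicting $\sigma \in \C$. If $\tau' = \{i\}$, then $U_{\sigma'} \subseteq U_i$; since $i \notin \rho$, without loss of generality $i \notin \sigma$, and then $U_\sigma \subseteq U_{\sigma'} \subseteq U_i$ contradicts the existence of a point of $U_\sigma$ lying outside every $U_k$ with $k \notin \sigma$.

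The main technical obstacle will be in (3) $\Rightarrow$ (1): carefully unpacking the minimality of a $\CF(J_\C)$-element to extract the precise codewords $c_j$ whose pairwise intersections, guaranteed to be codewords by $\cap$-completeness, provide a witness against $x_\sigma \prod_{i \in \tau}(1-x_i) \in J_\C$. The remaining implications reduce mechanically to the divisibility structure of pseudo-monomial ideals together with Lemma~\ref{lemma:minRF}.
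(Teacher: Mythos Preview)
Your proposal is correct and follows essentially the same approach as the paper's proof. The only cosmetic difference is in $(3)\Rightarrow(1)$: the paper passes to the restricted code $\C|_{\sigma\cup\tau}$ and shows $\sigma\cup\{j\},\,\sigma\cup\{k\}\in\C|_{\sigma\cup\tau}$, whereas you work directly in $\C$ with the witnesses $c_j$ and intersect them there; these are the same argument, since $c_j\cap(\sigma\cup\tau)=\sigma\cup\{j\}$.
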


In the proof, we will use the notation $\U|_{\sigma \cup \tau}$ to denote the subcover $\U|_{\sigma \cup \tau}\od \{U_i \mid i \in \sigma \cup \tau\}$.  Also we use $\C|_{\sigma \cup \tau}$ to denote the restricted code $\C|_{\sigma \cup \tau} \od \{ \omega \in \C \mid \omega \subseteq \sigma \cup \tau\}$.  Note that if $\C = \C(\U)$ then $\C|_{\sigma \cup \tau} = \C(\U|_{\sigma \cup \tau})$.  We also use the straightforward fact that if $\C$ is $\cap$-complete, then $\C|_{\sigma \cup \tau}$ is $\cap$-complete for any $\sigma \cup \tau \subset [n]$.

\begin{proof} 
We prove $(1) \Leftrightarrow (2)$, and then $(1) \Leftrightarrow (3)$.  

\noindent (1)$\Leftrightarrow$(2): By Lemma~\ref{lemma:minRF}, there is a one-to-one correspondence between pseudo-monomials in $\CF(J_\C)$ and minimal RF relationships.  Thus, every $x_\sigma \prod_{i \in \tau} (1-x_i)$ in $\CF(J_\C)$ has $|\tau| \leq 1$ if and only if the only minimal RF relationships are those of the form $(\sigma, \emptyset)$ and $(\sigma, \{i\})$ for some $i \in [n]$.  Recall that if $(\sigma, \tau)$ is a RF relationship with $\tau \neq \emptyset$, then $U_\sigma \neq \emptyset$ since $U_\sigma \cap U_i \neq \emptyset$ for all $i \in \tau$; hence $(\sigma, \tau)$ contains a minimal RF relationship $(\sigma', \tau')$ for some non-empty $\sigma' \subseteq \sigma$ and non-empty $\tau' \subseteq \tau$.  We see that (1) holds if and only if each such RF relationship $(\sigma, \tau)$ contains a minimal relationship of the form $(\sigma', \{i\})$ for some $i \in \tau$, in which case, $(\sigma, \{i\})$ is also an RF relationship since $U_\sigma \subseteq U_{\sigma'}$.  Thus (1) and (2) are equivalent.\\

\noindent (1)$\Rightarrow$(3):\  Suppose $\CF(J_\C)$ only contains pseudo-monomials with $|\tau| \leq 1.$  Consider a pair of overlapping codewords $\omega_1, \omega_2 \in \C$ and let $\sigma = \omega_1 \cap \omega_2$.   To obtain a contradiction, suppose $ \sigma \notin \C$.   Then $\sigma \in \Delta(\C) \setminus \C$, since it is a subset of $\omega_1, \omega_2 \in \Delta(\C)$.  This implies that $x_\sigma \prod_{i \in \tau} (1-x_i) \in J_\C$ for some $\tau \neq \emptyset$ with $\sigma \cap \tau = \emptyset$.  It follows that $x_\sigma \prod_{i \in \tau} (1-x_i)$ must be a multiple of a Type 2 pseudo-monomial in the canonical form, say $x_{\sigma'}(1-x_i) \in\CF(J_\C),$ where $i \in \tau$ and $\sigma' \subseteq \sigma$ is nonempty.  In particular, $x_\sigma (1-x_i) \in J_\C$, which implies $U_\sigma \subseteq U_i$ by Table~\ref{table:J_C-and-RF}.  Since $\sigma \subseteq \omega_1$, $U_{\omega_1} \subseteq U_\sigma \subseteq U_i$ which implies $i \in \omega_1$ since otherwise the region corresponding to $\omega_1$ would be covered and thus could not produce a codeword.  Similarly, $i \in \omega_2$, and so $i \in \omega_1 \cap \omega_2 = \sigma$.  But then $ i \in \sigma \cap \tau$ contradicting $\sigma \cap \tau = \emptyset$.  Hence $\sigma \in \C$ and $\C$ is $\cap$-complete.\\

\noindent (3)$\Rightarrow$(1):\ Suppose $\C$ is $\cap$-complete, and consider an element $x_\sigma \prod_{i \in \tau} (1-x_i) \in \CF(J_\C).$  Note that for any cover $\U$ with $\C=\C(\U)$, this implies $U_\sigma \subseteq \bigcup_{i \in \tau} U_i$ in $\U|_{\sigma \cup \tau}$, and so $\sigma \notin \C|_{\sigma \cup \tau}.$  To obtain a contradiction, assume $|\tau| > 1$ and let $j, k \in \tau$ with $j \neq k$.  

Suppose that $U_{\sigma \cup \{j\}} \subseteq \bigcup_{i \in \tau\setminus \{j\}} U_i$.  Then it would follow that the portion of $U_\sigma$ that is covered by $U_j$ (i.e.\ $U_\sigma \cap U_j = U_{\sigma \cup \{j\}})$ is also covered by $\bigcup_{i \in \tau \setminus \{j\}} U_i$.  This would imply that $U_\sigma \subseteq \bigcup_{i \in \tau\setminus \{j\}} U_i$, but that would contradict the minimality of $U_\sigma \subseteq \bigcup_{i \in \tau} U_i$ guaranteed by $x_\sigma \prod_{i \in \tau} (1-x_i) \in \CF(J_\C)$.  Hence, $U_{\sigma \cup \{j\}} \nsubseteq \bigcup_{i \in \tau\setminus \{j\}} U_i$, and so 
$$U_{\sigma \cup \{j\}} \setminus \bigcup_{i \in \tau\setminus \{j\}} U_i \neq \emptyset$$
ensuring that $\sigma \cup \{j\} \in \C|_{\sigma \cup \tau}.$\footnote{Alternatively, we can see $\sigma \cup \{j\} \in \C|_{\sigma \cup \tau}$ algebraically, by considering $\rho = x_\sigma \prod_{i \in \tau\setminus j} (1-x_i)$.  We have $\rho (1-x_j) \in \CF(J_\C)$ while $\rho \notin J_\C$, by minimality of elements in $\CF(J_\C)$, and so Lemma~\ref{lemma:J_C-algebra} guarantees $\rho x_j = x_\sigma x_j \prod_{i \in \tau\setminus j} (1-x_i) \notin J_\C$.  Thus $\sigma \cup \{j\} \in \C|_{\sigma \cup \tau}$, since the absence of that pseudo-monomial from $J_\C$ implies $U_{\sigma \cup j}$ is nonempty and is not covered sets in $\tau \setminus \{j\}$.}  By the same argument, we have $\sigma \cup \{k\} \in \C|_{\sigma \cup \tau}$.

Let $\omega_j = \sigma \cup \{j\}$ and $\omega_k = \sigma \cup \{k\}$.  Since the restricted code $\C|_{\sigma \cup \tau}$ is $\cap$-complete, this implies that $\sigma = \omega_j \cap \omega_k$ must be in $\C|_{\sigma \cup \tau}$.  But this contradicts the hypothesis that $x_\sigma \prod_{i \in \tau} (1-x_i) \in \CF(J_\C)$ with $j, k\in \tau$, which guaranteed $\sigma \notin \C|_{\sigma \cup \tau}.$  Thus, we conclude that $|\tau| \leq 1$ and (1) holds.  
\end{proof}

\section{Examples}\label{sec:examples}
In this section, we provide examples to illustrate how one can use the algebraic signatures summarized in Table~\ref{table:A-signatures} to detect convexity or non-convexity.  In Section~\ref{sec:examples-main-results}, examples of codes satisfying A-2, A-3, and B-4 were given, and so we do not give examples for those signatures here.

\begin{table}[!h]
\begin{center}
\begin{small}
\begin{tabular}{l l c l}
& Algebraic signature of $J_\C$ & & Property of $\C$\\
 \specialrule{.125em}{.2em}{.2em} 
 A-1 &  $\exists\; x_\sigma(1-x_i)(1-x_j) \in \CF^2(J_\C)$ s.t. $x_\sigma x_i x_j \in J_\C$ & $\Rightarrow$ & non-convex\\
 \specialrule{.05em}{.2em}{.2em} 
A-2 & $\exists \; x_\sigma\prod_{i \in \tau}(1-x_i) \in \CF^2(J_\C)$ s.t. $G_\C(\sigma,\tau)$ is disconnected & $\Rightarrow$ & non-convex\\
 \specialrule{.05em}{.2em}{.2em} 
A-3 &  $\exists \; x_\sigma\prod_{i \in \tau}(1-x_i) \in \CF^2(J_\C)$ s.t. $x_\sigma x_\tau \in \CF^1(J_\C)$ & $\Rightarrow$& non-convex\\
 \specialrule{.05em}{.2em}{.2em} 
\multirow{2}{*}{A-4} &  $\exists \; x_\sigma\prod_{i \in \tau}(1-x_i) \in \CF^2(J_\C)$, and $\exists \; \tilde\sigma \subseteq \sigma$ s.t. $x_{\tilde\sigma} x_\tau \in \CF^1(J_\C)$&  \multirow{2}{*}{$\Rightarrow$} & \multirow{2}{*}{non-convex}\\
& but $x_{\sigma'}x_{\tau'} \notin \CF^1(J_\C)$ for all  $\sigma'\subseteq \sigma,~\tau' \subsetneq \tau$\\
 \specialrule{.125em}{.2em}{.2em} 

\\

 \specialrule{.125em}{.2em}{.2em} 
 B-1 &  $\CF^1(J_\C) = \emptyset$ & $\Rightarrow$ & convex ($11\cdots1 \in \C$) \\
 \specialrule{.05em}{.2em}{.2em} 
 B-2 & $\CF^2(J_\C) = \emptyset$ & $\Rightarrow$ & convex ($\C=\Delta(\C)$)\\
  \specialrule{.05em}{.2em}{.2em} 
\multirow{2}{*}{B-3} & $\forall \; x_\sigma \in  \CF^1(J_\C)$, $|\sigma|=2$, and &\multirow{2}{*}{$\Rightarrow$}  & convex\\
& if $x_ix_j \in \CF^1(J_\C)$, then $x_ix_k$ or $x_jx_k \in \CF^1(J_\C)$ for all $k \in [n]$&& ($\Delta(\C)$ has disjoint facets)\\
  \specialrule{.05em}{.2em}{.2em} 
B-4 & $\forall \; x_\sigma\prod_{i \in \tau}(1-x_i) \in \CF^2(J_\C),$ $|\tau| = 1$ & $\Rightarrow$ & convex ($\C$ is $\cap$--complete)\\
 \specialrule{.125em}{.2em}{.2em} 
\end{tabular}
\caption{Algebraic signatures of convex and non-convex codes.  $G_{\C}(\sigma,\tau)$ is the simple graph on vertex set $\tau$ with edge set $\{ (ij) \in \tau \times \tau \mid x_\sigma x_i x_j \notin J_{\C} \}$.
}
\label{table:A-signatures}
\end{small}
\end{center}
\end{table}

\begin{example}[signature A-1]\label{ex:A-1}
Consider $\C_1=\{000, 110, 101, 011\}$. This code has
$$\CF^1(J_{\C_1})=\{x_1x_2x_3\} \hspace{.03in}\textrm{ and } \hspace{.03in} \CF^2(J_{\C_1})=\{ x_i(1-x_j)(1-x_k)~|~i,j,k = 1, 2, 3; \textrm{ and all indices distinct}\}.$$
Observe that $x_1(1-x_2)(1-x_3) \in \CF^2(J_{\C_1})$ and $x_1x_2x_3 \in \CF^1(J_{\C_1}) \subseteq J_{\C_1}$.  Thus signature A-1 applies, and so $\C_1$ is not convex.  

At the level of receptive fields, the obstruction is that $U_1$ is the disjoint union of nonempty sets $U_1\cap U_2$ and $U_1 \cap U_3$; hence $U_1$ is disconnected and not convex.
 \end{example}
\smallskip

\begin{example}[signature A-4]\label{ex:A-4}
Consider $\C_2=\{0,1\}^5 \setminus \{11000, 10111, 11111\}$.  Then
$$\CF^1(J_{\C_2})=\{x_1x_3x_4x_5\} \hspace{.03in}\textrm{ and } \hspace{.03in} \CF^2(J_{\C_2})=\{ x_1x_2(1-x_3)(1-x_4)(1-x_5)\}.$$
Consider $\sigma = \{1,2\}$ and $\tau = \{3, 4, 5\}$, so that $x_\sigma \prod_{i \in \tau}(1-x_i) \in \CF^2(J_{\C_2})$.  For $\tilde \sigma = \{1\}$, we see $x_{\tilde\sigma}x_\tau \in \CF^1(J_{\C_2})$ and for all  $\sigma'\subseteq \sigma,~\tau' \subsetneq \tau$, 
$x_{\sigma'}x_{\tau'} \notin \CF^1(J_{\C_2})$.  Thus A-4 applies, and so $\C_2$ is not convex.

In terms of receptive fields, we have that $U_1 \cap U_\tau = \emptyset$, and so $U_\sigma \cap U_\tau = \emptyset$.  But for all $\sigma' \subseteq \sigma$ and $\tau' \subsetneq \tau$, we have $U_{\sigma'} \cap U_{\tau'} \neq \emptyset$.  Hence, the collection of sets $U_\sigma \cap U_i$ for $i \in \tau$ form a hollow simplex covering $U_\sigma$, forcing $U_\sigma$ to contain a hole.  
\end{example}
\smallskip

\begin{example}[signature B-1]\label{ex:B-1}
Consider $\C_4=\{000, 110, 101, 111\}$.  Then 
$$\quad \CF^1(J_{\C_4})=\emptyset \hspace{.03in}\textrm{ and } \hspace{.03in} \CF^2(J_{\C_4})=\{x_2(1-x_1), \ x_3(1-x_1),\ x_1(1-x_2)(1-x_3)\}.$$
Since $\CF^1(J_{\C_4})$ is empty, signature B-1 applies, and so $\C_4$ is convex.  We could have also seen this directly from the fact that $111 \in \C_4$.
\end{example}
\smallskip

\begin{example}[signature B-2]\label{ex:B-2}
Consider $\C_5=\{0000, 1000, 0100, 0010, 0001, 1100, 1010, 0110, 0011, 1110\}$.  This code has
$$\CF^1(J_{\C_5})=\{x_1x_4, \ x_2x_4\} \hspace{.03in}\textrm{ and } \hspace{.03in}  \CF^2(J_{\C_5})=\emptyset.$$
Since $\CF^2(J_{\C_5})$ is empty, signature B-2 applies, and so $\C_5$ is a simplicial complex, and hence is convex.  
\end{example}
\smallskip

\begin{example}[signature B-3]\label{ex:B-3}
Consider $\C_6=\{0000, 0100, 0001, 1100, 1010, 1110\}$.  This code has
$$\CF^1(J_{\C_6})=\{x_1x_4, \ x_2x_4, \ x_3x_4\} \hspace{.03in}\textrm{ and } \hspace{.03in} \CF^2(J_{\C_6})=\{x_3(1-x_1), \ x_1(1-x_2)(1-x_3)\}.$$
Observe that all the elements of $\CF^1(J_{\C_6})$ have $|\sigma|= 2$, satisfying the first part of signature B-3.  For $x_1x_4 \in \CF^1(J_{\C_6})$, we have $x_2x_4 \in \CF^1(J_{\C_6})$ and $x_3x_4 \in \CF^1(J_{\C_6})$ so the second condition holds for $i=1$, $j=4$, and $k=2,3$.  It is easy to see the condition also holds for $i=2$, $j=4$ and for $i=3$, $j=4$.  Thus, signature B-3 applies, so $\Delta(\C_6)$ has disjoint facets (specifically, $\{1,2,3\}$ and $\{4\}$) and $\C_6$ is convex.  
\end{example}
\smallskip

The signatures A-1 through A-4 guarantee non-convexity by way of a local obstruction that can be detected from the canonical form, while signatures B-1 through B-4 guarantee convexity, and thus no local obstructions.  However, Example~\ref{ex:non-CF-detectable} showed that even in the absence of local obstructions corresponding to elements of $\CF(J_\C)$, a code $\C$ may still have local obstructions and thus be provably non-convex.  Additionally, even when a code has no local obstructions of any type, it may still be non-convex (see $\C_2$ in Example~\ref{ex:diff-generalizations}(b), first observed to be non-convex in \cite{counterexample}).  Despite these complicating factors, it may still be useful to identify when a code cannot have any local obstructions ``arising" from canonical form elements; more precisely, it has no \emph{$\CF$-detectable} local obstructions, as defined below.  

\begin{definition}\label{def:CF-detectable}
A local obstruction $(\sigma,\tau)$ is {\it $\CF$-detectable} if there exists a local obstruction
$(\sigma',\tau')$ with $\sigma' \subseteq \sigma$ and $\tau' \subseteq \tau,$ such that $(\sigma',\tau')$ is a minimal RF relationship.
\end{definition}

C-1 and C-2 give two algebraic signatures of codes with no $\CF$-detectable local obstructions.  Supplemental Text S2 provides more background on $\CF$-detectable local obstructions and Theorem~\ref{thm:A-sigs-no-CF-detectable} proving these signatures.

\begin{table}[!h]
\begin{center}
\begin{small}
\begin{tabular}{l l c l}
& Algebraic signature of $J_\C$ & & Property of $\C$\\
 \specialrule{.125em}{.2em}{.2em} 
\multirow{2}{*}{C-1} & \multirow{2}{*}{$\forall \; x_\sigma\prod_{i \in \tau}(1-x_i) \in \CF^2(J_\C)$, $x_\sigma x_\tau \notin J_\C$} & \multirow{2}{*}{$\Rightarrow$}  & no $\CF$-detectable\\ 
& & & local obstructions\\
 \specialrule{.05em}{.2em}{.2em} 
\multirow{2}{*}{C-2} & $\forall \; x_\sigma\prod_{i \in \tau}(1-x_i) \in \CF^2(J_\C)$,&\multirow{2}{*}{$\Rightarrow$} &  no $\CF$-detectable\\
 &  $\exists \; i\in \tau$ s.t. $x_ix_\omega \notin \CF^1(J_\C)$ for all $\omega \subseteq \sigma\cup \tau$ & & local obstructions \\ 
 \specialrule{.125em}{.2em}{.2em} 
\end{tabular}
\caption{Algebraic signatures of codes with no $\CF$-detectable local obstructions.  These codes are \underline{not} guaranteed to be convex or non-convex, but do not have any local obstructions that can be detected from the canonical form.}
\label{table:A-signatures-CF-detectable}
\end{small}
\end{center}
\vspace{-.15in}
\end{table}


\begin{example}[signature C-1]\label{ex:C-1}
Consider the code
\begin{eqnarray*}
&\C_7=&\{0000000, 0100000, 0010000, 0001000, 0000100, 0000010, 1100000,\\
&& 1010000, 1001000, 0110000, 0101000, 0011000, 0001100, 0000110, \\
&& 0000101, 0000011, 1110000, 1101000, 1011000, 0111000, 0000111, 1111000\}.
\end{eqnarray*}
More compactly, we can describe the codewords as subsets of active neurons, and we obtain 
$$\C_7=\{\emptyset, 2, 3, 4, 5, 6, 12, 13, 14, 23, 24, 34, 45, 56, 57, 67, 123, 124, 134, 234, 567, 1234\},$$
with maximal codewords $1234, 45,$ and $567$.  This code has 
\begin{eqnarray*}
&&\CF^1(J_{\C_7})=\{x_1x_5, \ x_1x_6, \ x_1x_7, \ x_2x_5, \ x_2x_6, \ x_2x_7, \ x_3x_5, \ x_3x_6, \ x_3x_7, \ x_4x_6, \ x_4x_7 \}, \;\textrm{and } \\
&& \CF^2(J_{\C_7})=\{ x_1(1-x_2)(1-x_3)(1-x_4), \ x_7(1-x_5)(1-x_6)\}.
\end{eqnarray*}

Since all the elements of $\CF^1(J_{\C_7})$ have $|\sigma|\leq 2$, we might attempt to apply signature B-3 to guarantee convexity; however, that signature fails here since $x_1x_5 \in \CF^1(J_{\C_7})$ but neither $x_1x_4 \in \CF^1(J_{\C_7})$ nor $x_4x_5 \in \CF^1(J_{\C_7})$.  Thus, we must turn to $\CF^2(J_{\C_7})$.  For $ x_1(1-x_2)(1-x_3)(1-x_4)$, we see that $x_1x_2x_3x_4 \notin J_{\C_7}$ since no factor of it is in $\CF^1(J_{\C_7})$.  Similarly, for $x_7(1-x_5)(1-x_6)$, we see that $x_5x_6x_7 \notin J_{\C_7}$ since it has no factors in $\CF^1(J_{\C_7})$.  Thus, signature C-1 is satisfied and $\C_7$ has no $\CF$-detectable local obstructions.  This signature does not enable us to conclude anything about the convexity of $\C_7$; however, $\C_7$ is in fact convex, as it is max $\cap$-complete.

\begin{figure}[!ht]
\begin{centering}
\includegraphics[height=1in]{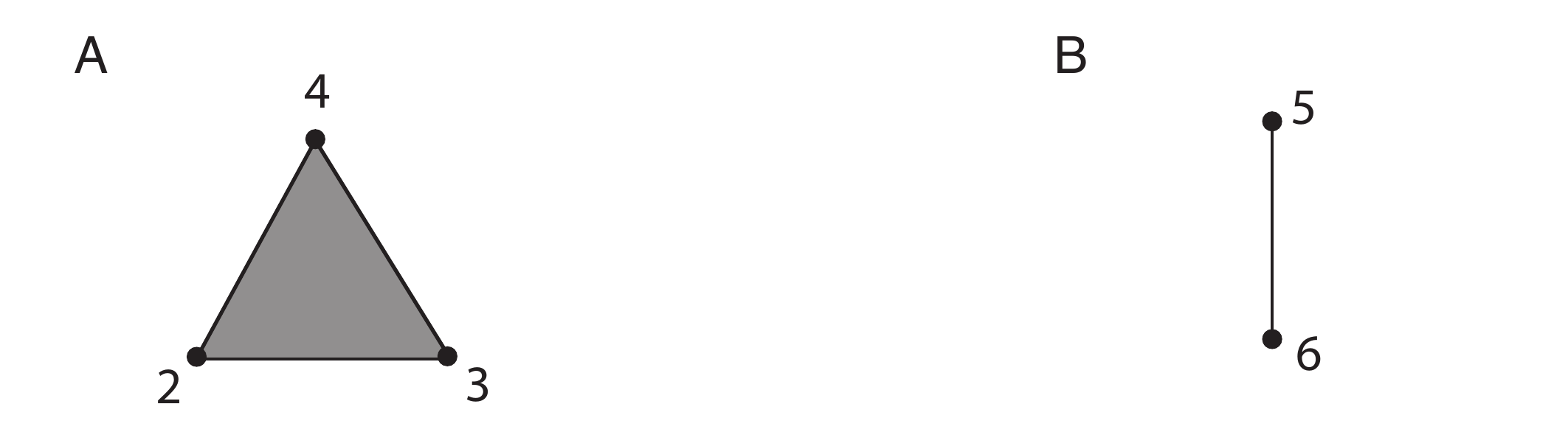}
\caption{Links from Example~\ref{ex:C-1}.  (A) $\Lk_1(\Delta|_{1234})$ is a simplex.  (B) $\Lk_7(\Delta|_{567})$ is a simplex.}
\label{fig:simplex-links}
\end{centering}
\end{figure} 
\end{example}
\smallskip

Recall from Section~\ref{sec:local-obs} that the source of $\CF$-detectable local obstructions is non-contractible links of the form $\Lk_\sigma(\Delta|_{\sigma \cup \tau})$, where $x_\sigma \prod_{i \in \tau}(1-x_i) \in \CF^2(J_\C)$.  For $\C_7$, the relevant links occur when $\sigma=\{1\}$ and $\tau=\{2, 3, 4\}$ and when $\sigma =\{7\}$ and $\tau =\{5, 6\}$.  Since $x_1x_2x_3x_4 \notin J_{\C_7}$, the link $\Lk_1(\Delta|_{1234})$ contains the top-dimensional face $234$, and thus is a simplex, which is contractible (see Figure~\ref{fig:simplex-links}A).  Similarly, $\Lk_7(\Delta|_{567})$ (shown in Figure~\ref{fig:simplex-links}B) is a simplex since $x_5x_6x_7 \notin J_{\C_7}$, and so is contractible.  Signature C-1 precisely characterizes when all the relevant links $\Lk_\sigma(\Delta|_{\sigma \cup \tau})$ are simplices, and hence contractible, for minimal receptive field relationships $(\sigma, \tau)$.  This ensures the absence of any $\CF$-detectable local obstructions.  The previous example showed that some codes satisfying signature C-1 are convex; however, this signature does \underline{not} guarantee convexity.  Specifically, code $\C_2$ from Example~\ref{ex:diff-generalizations} also satisfies this signature, and in fact has no local obstructions, yet that code is \underline{not} convex \cite{counterexample}.

\begin{example}[signature C-2]\label{ex:C-2}
Consider the code
\begin{eqnarray*}
&\C_8=&\{0000000, 0100000, 0010000, 0001000, 0000100, 0000010,  \\
&&1100000, 1010000, 1001000, 0101000, 0011000, 0010100, 0001100,  \\
&& 0000110, 0010001, 0001001, 0000101, 0000011, 1101000, 1011000,  \\
&& 0011100, 0011001, 0010101, 0001101, 0000111, 0011101\}.
\end{eqnarray*}
More compactly, 
$$\C_8= \{\emptyset, 2, 3, 4, 5, 6, 12, 13, 14, 24, 34, 35, 45, 56, 37, 47, 57, 67, 124, 134, 345, 347, 357, 457, 567, 3457\}.$$
This code has
\begin{eqnarray*}
&&\CF^1(J_{\C_8})=\{x_1x_5, \ x_1x_6, \  x_1x_7, \ x_2x_3, \ x_2x_5, \ x_2x_6, \ x_2x_7, \ x_3x_6, \ x_4x_6\} \hspace{.03in}\textrm{ and } \hspace{.03in}  \\
&& \CF^2(J_{\C_8})=\{ x_1(1-x_2)(1-x_3)(1-x_4), \ x_7(1-x_3)(1-x_4)(1-x_5)(1-x_6)\}.
\end{eqnarray*}
For $x_1(1-x_2)(1-x_3)(1-x_4) \in \CF^2(J_{\C_8})$, we have $\sigma=\{1\}$ and $\tau=\{2, 3, 4\}$.  Observe that $x_4$ does not appear together with $x_1$, $x_2$, or $x_3$ in $\CF^1(J_{\C_8})$, so for $i=4 \in \tau$, we have $x_ix_\omega \notin \CF^1(J_{\C_8})$ for every $\omega \subseteq \sigma \cup \tau$.  

For $x_7(1-x_3)(1-x_4)(1-x_5)(1-x_6) \in \CF^2(J_{\C_8})$, we have $\sigma =\{7\}$ and $\tau =\{3, 4, 5, 6\}$.  Observe that $x_5$ does not appear with any of $x_3$, $x_4$, $x_6$, or $x_7$ in $\CF^1(J_{\C_8})$, so for $i=5 \in \tau$, we have $x_ix_\omega \notin \CF^1(J_{\C_8})$ for every $\omega \subseteq \sigma \cup \tau$.   Thus, signature C-2 is satisfied, and so $\C_8$ has no $\CF$-detectable local obstructions.  In fact, $\C_8$ is convex, as it is max $\cap$-complete.

\begin{figure}[!ht]
\begin{centering}
\includegraphics[height=1in]{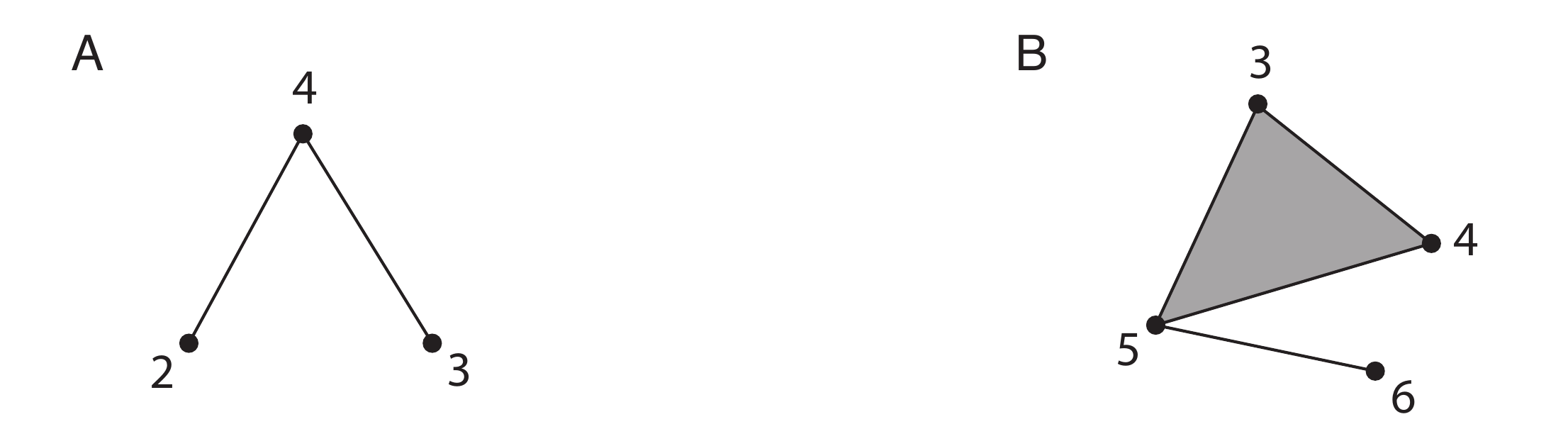}
\caption{Links from Example~\ref{ex:C-2}.  (A) $\Lk_1(\Delta|_{1234})$ is a cone with respect to 4.  (B) $\Lk_7(\Delta|_{34567})$ is a cone with respect to vertex 5.}
\label{fig:cone-links}
\end{centering}
\end{figure} 

\end{example}

As noted in Example~\ref{ex:C-1}, to understand the absence of $\CF$-detectable local obstructions we need to consider links for pairs $(\sigma, \tau)$ where $x_\sigma \prod_{i \in \tau}(1-x_i) \in \CF^2(J_\C)$.  For $\C_8$, the relevant links occur when $\sigma=\{1\}$ and $\tau=\{2, 3, 4\}$ and when $\sigma =\{7\}$ and $\tau =\{3, 4, 5, 6\}$.  The link $\Lk_1(\Delta|_{1234})$ is shown in Figure~\ref{fig:cone-links}A and is a cone with respect to vertex 4, so is contractible.  Similarly, $\Lk_7(\Delta|_{34567})$ is shown in Figure~\ref{fig:cone-links}B and is a cone with respect to vertex 5, so is contractible.  In fact, signature C-2 characterizes when all the relevant links $\Lk_\sigma(\Delta|_{\sigma \cup \tau})$ are cones, and hence contractible, for minimal receptive field relationships $(\sigma, \tau)$.  Thus, signature C-2 generalizes C-1.  This again ensures the absence of any $\CF$-detectable local obstructions, but does not necessarily ensure convexity (e.g.\ code $\C_2$ from Example~\ref{ex:diff-generalizations} satisfies this signature, but is not convex).  

It is worth noting though that $\cap$-complete codes (characterized by signature B-4) are a special class of codes satisfying signature C-2 that are guaranteed to be convex.  Specifically, if a code satisfies B-4, then every element of $\CF^2(J_\C)$ has the form $x_\sigma (1-x_i)$, and so Lemma~\ref{lemma:J_C-algebra} guarantees that $x_\sigma x_i \notin J_\C$.  Thus, no factor of $x_\sigma x_i$ can be in $\CF^1(J_\C)$, and so signature C-2 holds.

\section*{Acknowledgements}
This work began at a 2014 AMS Mathematics Research Community,  ``Algebraic and Geometric Methods in Applied Discrete Mathematics,'' which was supported by NSF DMS-1321794.
CC was supported by NIH R01 EB022862 and NSF DMS-1516881; EG was supported by NSF DMS-1620109; JJ was supported by NSF DMS-1606353; KM was supported by NIH R01 EB022862; and AS was supported by NSF DMS-1312473/1513364 and Simons Foundation grant 521874.  We thank Mohamed Omar and Caitlin Lienkaemper for numerous discussions.

\bibliographystyle{plain}
\bibliography{convexity-refs}

\begin{thebibliography}{10}

\bibitem{intersection-complete}
J.~Cruz, C.~Giusti, V.~Itskov, and W.~Kronholm.
\newblock On open and closed convex codes.
\newblock Available online at \url{https://arxiv.org/abs/1609.03502}.

\bibitem{CurrEventsBulletin}
C.~Curto.
\newblock What can topology tells us about the neural code?
\newblock {\em Bull.\ Amer.\ Math.\ Soc.}, 54(1):63--78, 2017.

\bibitem{MRC}
C.~Curto, E.~Gross, J.~Jeffries, K.~Morrison, M.~Omar, Z.~Rosen, A.~Shiu, and
  N.~Youngs.
\newblock What makes a neural code convex?
\newblock {\em SIAM J. of Appl. Algebra and Geometry}, 1:222--238, 2017.

\bibitem{neuro-coding}
C.~Curto, V.~Itskov, K.~Morrison, Z.~Roth, and J.~L. Walker.
\newblock Combinatorial neural codes from a mathematical coding theory
  perspective.
\newblock {\em Neural Comput.}, 25(7):1891--1925, 2013.

\bibitem{neural_ring}
C.~Curto, V.~Itskov, A.~Veliz-Cuba, and N.~Youngs.
\newblock The neural ring: an algebraic tool for analyzing the intrinsic
  structure of neural codes.
\newblock {\em Bull. Math. Biol.}, 75(9):1571--1611, 2013.

\bibitem{no-go}
C.~Giusti and V.~Itskov.
\newblock A no-go theorem for one-layer feedforward networks.
\newblock {\em Neural Comput.}, 26(11):2527--2540, 2014.

\bibitem{Hatcher}
A.~Hatcher.
\newblock {\em Algebraic topology}.
\newblock Cambridge University Press, Cambridge, 2002.

\bibitem{HubelWiesel59}
D.~H. Hubel and T.~N. Wiesel.
\newblock Receptive fields of single neurons in the cat's striate cortex.
\newblock {\em J. Physiol.}, 148(3):574--591, Oct 1959.

\bibitem{mo-student-thesis}
R.~A. Jeffs.
\newblock Convexity of neural codes.
\newblock \emph{{U}ndergraduate thesis.} {H}arvey {M}udd {C}ollege, 2016.

\bibitem{counterexample}
C.~Lienkaemper, A.~Shiu, and Z.~Woodstock.
\newblock Obstructions to convexity in neural codes.
\newblock {\em Advances in Appl. Math.}, 85:31--59, 2017.

\bibitem{Okeefe}
J.~O'Keefe and J.~Dostrovsky.
\newblock The hippocampus as a spatial map. {P}reliminary evidence from unit
  activity in the freely-moving rat.
\newblock {\em Brain Res.}, 34(1):171--175, 1971.

\bibitem{NeuralIdealsSage}
E.~Petersen, N.~Youngs, R.~Kruse, D.~Miyata, R.~Garcia, and L.D.~Garcia Puente.
\newblock Neural ideals in {S}age{M}ath.
\newblock Available online at \url{https://arxiv.org/abs/1609.09602}.

\bibitem{Tancer-survey}
M.~Tancer.
\newblock Intersection patterns of convex sets via simplicial complexes: a
  survey.
\newblock In {\em Thirty essays on geometric graph theory}, pages 521--540.
  Springer, New York, 2013.

\bibitem{github}
N.~Youngs.
\newblock Neural ideal: a {M}atlab package for computing canonical forms.
\newblock \url{https://github.com/nebneuron/neural-ideal}, 2015.

\end{thebibliography}

\section{Supplemental Text}

\subsection*{S1: Computing the canonical form $\CF(J_\C)$}  In the following two examples, we illustrate how to compute the canonical form by hand.  For details on how to algorithmically calculate $\CF(J_\C)$ and software to support this, see \cite{NeuralIdealsSage}.

\begin{example}[$\CF(J_{\C_5})$ from Example~\ref{ex:B-2}]
Consider the code $$\C_5=\{0000, 1000, 0100, 0010, 0001, 1100, 1010, 0110, 0011, 1110\}$$ from Example~\ref{ex:B-2}.  Here we show how to compute $\CF(J_{\C_5})$ by hand.

Recall the neural ideal $J_\C \od \langle \chi_{\vv}~|~\vv \in \F_2^n \setminus \C \rangle,$ where $\chi_{\vv}$ is the characteristic pseudo-monomial of $\vv$ (as defined in Equation~\eqref{eq:char-pseudo-monomials}).  The non-codewords are 1001, 0101, 1101, 1011, 0111, 1111, and so 
$$\begin{array}{l} J_{\C_5}=\langle\{x_1x_4(1-x_2)(1-x_3), \ x_2x_4(1-x_1)(1-x_3),\\
 \hspace{.54in} x_1x_2x_4(1-x_3), \ x_1x_3x_4(1-x_2), \ x_2x_3x_4(1-x_1), \ x_1x_2x_3x_4 \}\rangle. \end{array}$$
Since $x_1x_4$ is a minimal divisor of generators in $J_{\C_5}$ that vanishes on all codewords (so it is in $J_{\C_5}$), we have $x_1x_4 \in \CF^1(J_{\C_5})$.  Similarly, $x_2x_4\in \CF^1(J_{\C_5})$.  
Since all the generators of $J_{\C_5}$ are multiples of $x_1x_4$ and $x_2x_4$, both of which are monomials, it follows that $\CF^2(J_{\C_5})$ is empty and $\CF^1(J_{\C_5})=\{x_1x_4, \ x_2x_4\}$.  Thus, $\CF(J_{\C_5})=\CF^1(J_{\C_5})~\cup~\CF^2(J_{\C_5})$ where 
$$\CF^1(J_{\C_5})=\{x_1x_4, \ x_2x_4\} \quad   \textrm{and} \quad  \CF^2(J_{\C_5})=\emptyset.$$
\end{example}

\begin{example}[$\CF(J_{\C_6})$ from Example~\ref{ex:B-3}]
Consider the code $\C_6=\{0000, 0100, 0001, 1100, 1010, 1110\}$ from Example~\ref{ex:B-3}.  The non-codewords are 1000, 0010, 1001, 0110, 0101, 0011, 1101, 1011, 0111, 1111.  Thus, 
$$\begin{array}{l} J_{\C_6}=\langle\{x_1(1-x_2)(1-x_3)(1-x_4), \ x_3(1-x_1)(1-x_2)(1-x_4), \\
\hspace{.54in} x_1x_4(1-x_2)(1-x_3), \ x_2x_3(1-x_1)(1-x_4), \ x_2x_4(1-x_1)(1-x_3),\\
\hspace{.54in} x_3x_4(1-x_1)(1-x_2), \ x_1x_2x_4(1-x_3), \ x_1x_3x_4(1-x_2), \ x_2x_3x_4(1-x_1), \ x_1x_2x_3x_4 \}\rangle. \end{array}$$
Since $x_1x_4$ is a minimal divisor of generators in $J_{\C_6}$ that vanishes on all codewords, $x_1x_4 \in \CF^1(J_{\C_6})$.  Similarly, $x_2x_4, x_3x_4 \in \CF^1(J_{\C_6})$.  Since $1110 \in \C_6$, none of $x_1x_2$, $x_1x_3$, nor $x_1x_2x_3$ is in $J_{\C_6}$, and so $\CF^1(J_{\C_6})=\{x_1x_4, \ x_2x_4, \ x_3x_4\}$.  Every pseudo-monomial in $J_{\C_6}$ is a multiple of one of the monomials in $\CF^1(J_{\C_6})$ except for $x_1(1-x_2)(1-x_3)(1-x_4), \ x_3(1-x_1)(1-x_2)(1-x_4),$ and  $x_2x_3(1-x_1)(1-x_4)$.  The minimal pseudo-monomials in $J_{\C_6}$ that generate these are $x_3(1-x_1)$ and $x_1(1-x_2)(1-x_3)$, and so $\CF^2(J_{\C_6}) \supseteq \{x_3(1-x_1), \ x_1(1-x_2)(1-x_3)\}$.   In fact, one can check that this is the complete set of generators of $\CF^2(J_\C)$ \cite{neural_ring}.  
Thus, $\CF(J_{\C_6})=\CF^1(J_{\C_6})~\cup~\CF^2(J_{\C_6})$ where 
$$\CF^1(J_{\C_6})=\{x_1x_4, \ x_2x_4, \ x_3x_4\} \quad   \textrm{and} \quad  \CF^2(J_{\C_6})=\{x_3(1-x_1), \ x_1(1-x_2)(1-x_3)\}.$$
\end{example}

\subsection*{S2: $\CF$-dectectable local obstructions}
Some local obstructions $(\sigma,\tau)$ correspond to minimal RF relationships.  Among those that do not, we distinguish local obstructions that can be ``stripped down'' (by removing neurons from $\sigma$ and/or $\tau$) to local obstructions corresponding to minimal RF relationships.  We refer to local obstructions that correspond to minimal RF relationships or that can be stripped down to such as \emph{$\CF$-detectable} local obstructions (precise definition was given in Definition \ref{def:CF-detectable}).  As we will see, both these types of local obstructions can be detected directly from $\CF(J_\C)$. 

Since every minimal RF relationship corresponds to a pseudo-monomial in $\CF(J_\C)$ (Lemma~\ref{lemma:minRF}), Lemma~\ref{lemma:A-signature} shows that all $\CF$-detectable local obstructions can be determined solely from the canonical form.  

\begin{lemma}\label{lemma:A-signature}
Given a code $\C$, the following are equivalent:
\begin{enumerate}[(1)]
\item The link $\Lk_\sigma(\Delta|_{\sigma \cup \tau})$ is contractible for every $(\sigma, \tau)$ such that $x_\sigma \prod_{i \in \tau} (1-x_i) \in \CF^2(J_\C)$,
\item The link $\Lk_\sigma(\Delta|_{\sigma \cup \tau})$ is contractible for every minimal RF relationship $(\sigma, \tau)$ with $\tau \neq \emptyset$, and
\item $\C$ has no $\CF$-detectable local obstructions.
\end{enumerate}
\end{lemma}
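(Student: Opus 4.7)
The plan is to establish the two equivalences $(1)\Leftrightarrow(2)$ and $(2)\Leftrightarrow(3)$ as essentially direct consequences of earlier results and the definitions, without needing any new geometric input. The core observation is that ``$\CF^2$-elements'' and ``minimal RF relationships with $\tau\neq\emptyset$'' are the same set of pairs $(\sigma,\tau)$, and the condition defining a local obstruction is precisely non-contractibility of $\Lk_\sigma(\Delta|_{\sigma\cup\tau})$. Everything else is bookkeeping.

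First, for $(1)\Leftrightarrow(2)$, I would invoke Lemma~\ref{lemma:minRF}, which gives the bijection
\[
x_\sigma \prod_{i\in\tau}(1-x_i) \in \CF^2(J_\C)
\;\;\Longleftrightarrow\;\;
(\sigma,\tau) \text{ is a minimal RF relationship.}
\]
Here the ``Type 2'' clause forces both $\sigma\neq\emptyset$ and $\tau\neq\emptyset$, and conversely any RF relationship has $\sigma\neq\emptyset$ by definition, so the pairs indexed by $\CF^2(J_\C)$ are exactly the minimal RF relationships with $\tau\neq\emptyset$. The condition ``link is contractible'' depends only on $(\sigma,\tau)$, so ranging over the two descriptions of the same index set gives $(1)\Leftrightarrow(2)$ immediately.

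Next, for $(2)\Leftrightarrow(3)$, I would unpack the definition of $\CF$-detectable. By Definition~\ref{def:CF-detectable}, $(\sigma,\tau)$ is a $\CF$-detectable local obstruction exactly when there exists a local obstruction $(\sigma',\tau')$ with $\sigma'\subseteq\sigma$, $\tau'\subseteq\tau$ that is itself a minimal RF relationship. The key reduction is that $\C$ has a $\CF$-detectable local obstruction if and only if there exists some minimal RF relationship that is already a local obstruction: the ``only if'' direction follows since any witness $(\sigma',\tau')$ is itself such a minimal-RF local obstruction, and the ``if'' direction follows by taking $(\sigma,\tau)=(\sigma',\tau')$. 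Then, since a local obstruction by definition requires $\tau\neq\emptyset$ and $\Lk_\sigma(\Delta|_{\sigma\cup\tau})$ non-contractible (and minimal RF relationships with $\tau=\emptyset$ can never be local obstructions), ``no minimal RF relationship is a local obstruction'' is equivalent to ``every minimal RF relationship with $\tau\neq\emptyset$ has contractible link,'' which is exactly (2).

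Stringing the two equivalences together finishes the proof. I do not anticipate any serious obstacle: the statement is a repackaging lemma whose content is already carried by Lemma~\ref{lemma:minRF} and the definitions of local obstruction and $\CF$-detectability. The only subtle point to state explicitly is that a minimal RF relationship $(\sigma,\tau)$ with $\tau=\emptyset$ is never a local obstruction, so it contributes vacuously to all three conditions; this is what allows $(2)$ to quantify only over minimal RF relationships with $\tau\neq\emptyset$ while $(3)$ implicitly quantifies over all of them.
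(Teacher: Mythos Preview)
Your proposal is correct and takes essentially the same approach as the paper: both prove $(1)\Leftrightarrow(2)$ directly from Lemma~\ref{lemma:minRF} and the definition of $\CF^2$, and both prove $(2)\Leftrightarrow(3)$ by observing that a $\CF$-detectable local obstruction exists if and only if some minimal RF relationship is itself a local obstruction. Your explicit remark that minimal RF relationships with $\tau=\emptyset$ are never local obstructions is a nice clarification that the paper leaves implicit.
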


\begin{proof}
It is clear that $(1)$ and $(2)$ are equivalent since $x_\sigma \prod_{i \in \tau} (1-x_i) \in \CF^2(J_\C)$ if and only if $(\sigma,\tau)$ is a minimal RF relationship with $\tau \neq \emptyset$ by Lemma~\ref{lemma:minRF} and the definition of $\CF^2(J_\C)$.  

\noindent We now prove $(2) \Leftrightarrow (3)$ by contrapositive.  If $\C$ has a $\CF$-detectable local obstruction $(\sigma, \tau)$, then by definition there exist $\sigma' \subseteq \sigma$ and $\tau' \subseteq \tau$ such that $(\sigma', \tau')$ is a minimal RF relationship that gives a local obstruction.  Thus for that $(\sigma', \tau')$, $\Lk_{\sigma'}(\Delta|_{\sigma' \cup \tau'})$ is not contractible, and so (2) does not hold.  Conversely, if there exists a minimal RF relationship $(\sigma, \tau)$ such that $\Lk_\sigma(\Delta|_{\sigma \cup \tau})$ is not contractible, then $(\sigma, \tau)$ is itself a $\CF$-detectable local obstruction, and so (3) does not hold.  
\end{proof}

\begin{theorem}\label{thm:A-sigs-no-CF-detectable}
If $\C$ has either of the algebraic signatures in rows C-1 or C-2 of Table~\ref{table:A-signatures-CF-detectable}, then $\C$ has no $\CF$-detectable local obstructions.  

\begin{table}[!h]
\begin{center}
\begin{small}
\begin{tabular}{l l c l}
& Algebraic signature of $J_\C$ & & Property of $\C$\\
 \specialrule{.125em}{.2em}{.2em} 
\multirow{2}{*}{C-1} & \multirow{2}{*}{$\forall \; x_\sigma\prod_{i \in \tau}(1-x_i) \in \CF^2(J_\C)$, $x_\sigma x_\tau \notin J_\C$} & \multirow{2}{*}{$\Rightarrow$}  & no $\CF$-detectable\\ 
& & & local obstructions\\
 \specialrule{.05em}{.2em}{.2em} 
\multirow{2}{*}{C-2} & $\forall \; x_\sigma\prod_{i \in \tau}(1-x_i) \in \CF^2(J_\C)$,&\multirow{2}{*}{$\Rightarrow$} &  no $\CF$-detectable\\
 &  $\exists \; i\in \tau$ s.t. $x_ix_\omega \notin \CF^1(J_\C)$ for all $\omega \subseteq \sigma\cup \tau$ & & local obstructions \\ 
 \specialrule{.125em}{.2em}{.2em} 
\end{tabular}
\caption{Algebraic signatures of codes with no $\CF$-detectable local obstructions.  These codes are \underline{not} guaranteed to be convex or non-convex, but do not have any local obstructions that can be detected from the canonical form.}
\label{table:A-signatures-CF-detectable}
\end{small}
\end{center}
\end{table}
\vspace{-.3in}
\end{theorem}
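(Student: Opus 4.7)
The overall plan is to invoke Lemma~\ref{lemma:A-signature}, which reduces ``$\C$ has no $\CF$-detectable local obstructions'' to showing that the link $\Lk_\sigma(\Delta|_{\sigma\cup\tau})$ is contractible for every pair $(\sigma,\tau)$ with $x_\sigma\prod_{i\in\tau}(1-x_i)\in\CF^2(J_\C)$. I would then fix such a pair $(\sigma,\tau)$ and exhibit, under each signature, a concrete combinatorial reason the link is contractible: the link is a full simplex under C-1, and has a cone point under C-2.

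For signature C-1, my plan is to use Table~\ref{table:J_C-and-RF} to translate the hypothesis $x_\sigma x_\tau\notin J_\C$ into the statement $\sigma\cup\tau\in\Delta(\C)$. Once $\sigma\cup\tau$ is itself a face of $\Delta$, any $\omega\subseteq\tau$ satisfies $\sigma\cup\omega\subseteq\sigma\cup\tau\in\Delta$, so $\omega\in\Lk_\sigma(\Delta|_{\sigma\cup\tau})$. Combined with the fact that every element of the link is automatically a subset of $\tau$ (from the disjointness and membership conditions defining the link), this identifies $\Lk_\sigma(\Delta|_{\sigma\cup\tau})$ with the full simplex on $\tau$, which is contractible.

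For signature C-2, let $i\in\tau$ be the distinguished index guaranteed by the hypothesis. The plan is to show $i$ is a cone point of $\Lk_\sigma(\Delta|_{\sigma\cup\tau})$. Given any $\omega\in\Lk_\sigma(\Delta|_{\sigma\cup\tau})$ with $i\notin\omega$, I must verify $\omega\cup\{i\}$ is again in the link, equivalently $\sigma\cup\omega\cup\{i\}\in\Delta$, equivalently $x_i\,x_{\sigma\cup\omega}\notin J_\C$. Suppose for contradiction that $x_i\,x_{\sigma\cup\omega}\in J_\C$. Then some $x_\rho\in\CF^1(J_\C)$ divides it with $\rho\subseteq\{i\}\cup\sigma\cup\omega$. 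Since $\sigma\cup\omega\in\Delta$ gives $x_{\sigma\cup\omega}\notin J_\C$, the variable $x_i$ must appear in $x_\rho$, so $\rho=\{i\}\cup\omega'$ with $\omega'\subseteq\sigma\cup\omega\subseteq\sigma\cup\tau$. This directly contradicts the C-2 assumption that $x_i x_{\omega'}\notin\CF^1(J_\C)$ for all $\omega'\subseteq\sigma\cup\tau$, so $i$ is indeed a cone point and the link is contractible.

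The main obstacle I anticipate is the bookkeeping in the C-2 argument: one must argue carefully that every minimal monomial divisor of $x_i\,x_{\sigma\cup\omega}$ in $\CF^1(J_\C)$ is forced both to contain $x_i$ (ruled out by $x_{\sigma\cup\omega}\notin J_\C$ via minimality) and to have its remaining variables indexed inside $\sigma\cup\tau$, so that the C-2 hypothesis applies. Once this is in place, contractibility via the cone point is immediate, and Lemma~\ref{lemma:A-signature} packages both cases into the desired conclusion.
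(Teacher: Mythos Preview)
Your proposal is correct and follows essentially the same approach as the paper: reduce via Lemma~\ref{lemma:A-signature} to showing each relevant link is contractible, then show under C-1 that the link is the full simplex on $\tau$, and under C-2 that the distinguished index $i$ is a cone point. Your contradiction argument for C-2 (passing to a minimal monomial divisor in $\CF^1(J_\C)$ and noting it must contain $x_i$) is exactly the unpacking of the paper's one-line claim that ``$\CF^1(J_\C)$ generates the monomials of $J_\C$, so this condition together with $x_\sigma x_{\tilde\tau}\notin J_\C$ guarantees $x_ix_\sigma x_{\tilde\tau}\notin J_\C$.''
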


\begin{proof}
(C-1) Observe that $x_\sigma x_\tau \notin J_\C$ implies that $x_\sigma x_{\tau'} \notin J_\C$ for all $\tau' \subseteq \tau$ since $J_\C$ is an ideal.  Thus, $\tau' \in \Lk_\sigma(\Delta|_{\sigma \cup \tau})$ for all $\tau' \subseteq \tau$, and so $\Lk_\sigma(\Delta|_{\sigma \cup \tau})$ is the full simplex on the vertex set $\tau$.  Thus $\Lk_\sigma(\Delta|_{\sigma \cup \tau})$ is contractible.  Since this holds for all $(\sigma, \tau)$ such that $x_\sigma \prod_{i \in \tau} (1-x_i) \in \CF^2(J_\C)$, Lemma~\ref{lemma:A-signature} guarantees that $\C$ has no $\CF$-detectable local obstructions.

(C-2)  We will show that signature C-2 guarantees that for every $(\sigma, \tau)$ with $ x_\sigma\prod_{i \in \tau}(1-x_i) \in \CF^2(J_\C)$, the link $\Lk_\sigma(\Delta|_{\sigma \cup \tau})$ is a cone, and hence is contractible.  Consider $\tilde \tau \in \Lk_\sigma(\Delta|_{\sigma \cup \tau})$, so that $\tilde\tau \subseteq \tau$ and $\sigma \cup \tilde\tau \in \Delta(\C)$.  Since $\sigma \cup \tilde\tau \in \Delta(\C)$, we have $x_\sigma x_{\tilde\tau} \notin J_\C$.  By hypothesis, there exists an $i$ such that for every $\sigma' \subseteq \sigma$ and $\tau' \subseteq \tilde \tau$, $x_ix_{\sigma'}x_{\tau'} \notin \CF^1(J_\C)$.  Since $\CF^1(J_\C)$ generates the monomials of $J_\C$, this condition together with $x_\sigma x_{\tilde\tau} \notin J_\C$ guarantees that $x_ix_\sigma x_{\tilde\tau} \notin J_\C$, and so $\{i\} \cup \sigma \cup \tilde\tau \in \Delta(\C)$ implying that $\{i\} \cup \tilde \tau \in \Lk_\sigma(\Delta|_{\sigma \cup \tau})$.  Hence $\Lk_\sigma(\Delta|_{\sigma \cup \tau})$ is a cone with respect to $i$, and so is contractible.  Since this holds for all $(\sigma, \tau)$ such that $x_\sigma \prod_{i \in \tau} (1-x_i) \in \CF^2(J_\C)$, Lemma~\ref{lemma:A-signature} guarantees that $\C$ has no $\CF$-detectable local obstructions.
\end{proof}

As mentioned at the end of Section~\ref{sec:examples}, C-1 is just a special case of C-2.  Specifically, if a code satisfies C-1, then every $i \in \tau$ will satisfy the conditions of C-2, since C-1 guarantees that each link is a simplex, and thus also is a cone with any vertex acting as a cone point.  We nevertheless include the proof of C-1 to clarify the structure of these links.

\end{document}